\documentclass{amsart}
\usepackage{amsmath,amsfonts,amssymb,amscd,pb-diagram,mathrsfs,amsthm,graphicx,color}
\usepackage{cite, hyperref}
\def\sE{\mathscr{E}}
\def\sA{\mathscr{A}}
\def\sC{\mathscr{C}}

\def\sH{\mathscr{H}}
\def\sG{\mathscr{G}}
\def\sV{\mathscr{V}}
\def\u{\underline}
\def\CC{\mathbb{C}}
\def\RR{\mathbb{R}}
\def\QQ{\mathbb{Q}}
\def\NN{\mathbb{N}}
\def\ZZ{\mathbb{Z}}
\def\OO{\mathscr{O}}
\def\Set{\mathbf{Set}}
\def\fSet{\mathbf{fSet}}
\def\Bool{\mathbf{Bool}}
\def\fBool{\mathbf{fBool}}
\def\Conv{\mathbf{Conv}}
\def\Ord{\mathbf{Ord}}
\def\dLat{\mathbf{dLat}}
\def\Frm{\mathbf{Frm}}
\def\Loc{\mathbf{Loc}}
\def\FrmTop{\mathbf{FrmTop}}
\def\bFrmTop{\mathbf{bFrmTop}}
\def\lFrmTop{\mathbf{lFrmTop}}
\def\Dcpo{\mathbf{Dcpo}}

\def\dn{\downarrow}
\def\up{\uparrow}

\def\sF{\mathscr{F}}
\def\sK{\mathscr{K}}
\def\TT{\mathbf{T}}
\def\VV{\mathbf{V}}

\def\XX{\mathscr{X}}
\def\IIdl{\mathbf{Idl}}
\def\BBox{\mathbf{Box}}
\DeclareMathOperator{\Idl}{\mathrm{Idl}}
\DeclareMathOperator{\Sub}{\mathrm{Sub}}
\DeclareMathOperator{\Hom}{\mathrm{Hom}}
\DeclareMathOperator{\colim}{\mathrm{colim}}
\DeclareMathOperator{\id}{\mathrm{id}}
\DeclareMathOperator{\pr}{\mathrm{pr}}

\DeclareMathOperator{\Sh}{\mathrm{Sh}}
\DeclareMathOperator{\Sp}{\mathrm{Sp}}
\def\op{\mathrm{op}}
\theoremstyle{plain}
\newtheorem{lem}{Lemma}
\newtheorem{pro}{Proposition}
\newtheorem{thm}{Theorem}
\theoremstyle{definition}
\newtheorem{defn}{Definition}
\title{Non-signalling boxes and Bohrification}
\author{Jan Gutt}
\author{Marek Ku\'s}
\address{Center for Theoretical Physics, Polish Academy of Sciences, Al. Lotnik\'ow 32/46, 02-668 Warszawa}
\begin{document}
\maketitle
\sloppy
\section{Introduction}
\subsection{}
The premise of this note is the following observation: \emph{the
formalism of Bohrification, as developed by Heunen et
al.~\cite{heunen-landsman-spitters}, is a natural place for the
interpretation of general non-signalling theories}.
The former, i.e. Bohrification, is an embodiment of
of Bohr's idea that the account of all
evidence concerning quantum phenomena, despite their non-classical
character, must be expressed in classical terms \cite{bohr49}.
The latter, i.e. non-signalling theories, are extensions of the 
notion of a \emph{probability theory} of a physical system,
allowing correlations forbidden in a quantum mechanical model (cf. ~\cite{tsirelson}).
We show that the condition of non-signalling allows one to present these 
extended probability theories in the same terms as `Bohrified' quantum systems.

In standard quantum
theory~\cite{birkhoff-neumann} the logic of a system described by a
Hilbert space $\sH$ is represented by the orthomodular lattice of
closed subspaces in $\sH$. The involution sending a subspace to its
orthogonal complement represents logical negation, satisfying the law
of excluded middle: measuring the spin of an electron will yield either
`up' or `down', \emph{tertium non datur}. On the other hand, the
lattice is non-distributive: $x$-spin up does not imply $x$-spin up and
$z$-spin up \emph{or} $x$-spin up and $z$-spin down (incompatibility of
the two measurements is reflected by non-distributivity of the
sub-lattice they `generate', just as by non-commutativity of the
corresponding sub-algebra of operators). Having the lattice stand for
the logic of the system, one derives its probability theory where
\emph{states} assign `probabilities' to elements of the lattice,
respecting the underlying structure (order and complementation). These
states turn out to coincide with the usual density matrices by a
celebrated theorem of Gleason (as long as $\dim\sH\ge3$,~\cite{gleason}).

One of the original purposes of the Bohrification programme has been to
find an alternative logical foundation for the orthodox quantum
theory, replacing the non-distributive orthomodular lattices
with the distributive
logic of a point-free space. From the classical perspective,
distributivity comes at the price of the absence of the law of excluded
middle; from the intuitionistic (or constructive) viewpoint this is
however a feature rather than a flaw, and a characteristic of the true
logic of physical observation (see~\cite{vickers}).

On the other hand, non-signalling theories embodied in so-called `box
worlds' are an attempt at extending quantum mechanics as a theory of
\emph{probability}, reaching beyond the well-known bounds imposed on
correlations by the orthodox theory. We shall see that not only are
such hypothetical `box worlds' amenable to the process of
Bohrification, but furthermore they acquire a natural logical
structure, compatible with their probabilistic content. In particular,
probability valuations on the logic thus associated with a box world
are naturally identified with the standard `non-signalling box states'
(Theorem \ref{thm:states}). Moreover, one finds that a more general
class of non-signalling theories may be identified with their
`Bohrified' models, resulting in a common representation encompassing
box worlds, orthodox quantum systems and a potentially interesting
in-between (cf. \ref{ss:general}).

\subsection{Bohrification}
In certain sense, the goal of Bohrification is to find a well-behaved
\emph{phase space} for a quantum system. This is achieved by
constructing its logical avatar: a frame, i.e. a complete distributive
lattice where finite meets distribute over arbitrary joins. The lattice
of open subsets of a given topological space provides an example of a
frame, and in a leap of abstraction one may view any frame as a
`pointless topology' -- i.e., a virtual space examined only through its
collection of open subsets and the lattice operations on them. This
does seem to correspond to the way a physicist observes the phase space
of a system -- and thus to the actual logic of such observations.
To emphasise this
interpretation, one defines the category of \emph{locales} as the
opposite of the category of frames (recall that the topology functor
from topological spaces to frames is contravariant).
The reference to \emph{open subspaces}
reflects the intuitionism of the logic: negation corresponds to taking
the \emph{interior} of the complement, whence the disjunction of a
proposition and its negation need not be true.

However, we have not yet revealed a crucial technical aspect. Actually,
the new intuitionistic logic of the \emph{quantum} system is realised
not as a frame (or any paritally ordered \emph{set}), but rather as a
\emph{frame object} in a suitable \emph{topos}, intrinsically
associated with the system under consideration. Thus, the meta-logic
describing the logic of the system is the internal logic of a topos.
For instance, probability valuations on the frame object are viewed as
morphisms into a real numbers object (however, since the internal logic
is intuitionistic, different constructions of the real numbers that
would classically yield the same \emph{set}, may lead to non-isomorphic
objects when interpreted internally: e.g., one-sided Dedekind cuts are
in general not equivalent to two-sided ones). Dually, the frame object
is seen as an internal locale: the `phase space object'. Since the
topos arising in Bohrification is simply that of sheaves on some base
locale, one may represent the phase space \emph{externally} by a locale
(now in the category of sets) over the base locale -- even more
tangibly, these external locales are in fact topological spaces.

The original construction of Heunen et
al.~\cite{heunen-landsman-spitters} is set in the framework of
$C^*$-algebras, applicable to field-theoretic or statistical-mechanical
systems with infinitely many degrees of freedom. Recovering the
`correct' topological structure of the phase space is then a delicate
matter. Given a system described by a $C^*$-algebra $\sA$, the authors
consider the set $\sC$ of abelian $C^*$-subalgebras of $\sA$, partially
ordered by inclusion. Viewing $\sC$ as a category (with arrows
expressing the order relation), we have a tautological functor $\u\sA$
from $\sC$ to commutative $C^*$-algebras, sending $A \in \sC$ to $A$
iself. We may then view $\u\sA$ as an object of the presheaf topos
$\Set^\sC$ carrying the structure of an \emph{internal commutative
$C^*$-algebra} (the presheaf topos has functors $\sC \to \Set$ as
objects, and natural transformations as morphisms; equivalently, its
objects are $\sC$-shaped diagrams of sets). Now, an internal
\emph{Gelfand duality} is invoked to produce an internal locale $\u X$
such that $C(\u X, \u\CC) \simeq \u\sA$ (where $\u\CC$ is a suitably
defined internal locale of complex numbers, and $C(-,-)$ is the object
of continuous maps between a pair of internal locales). This $\u X$ is
the \emph{internal phase space} of the system, and the corresponding
internal frame is its logic.

For a finite-level system described by a finite-dimensional Hilbert
space $\sH$ the construction of the internal frame is much more direct
(see~\cite{caspers-heunen-landsman-spitters}). Indeed, setting $\sA =
B(\sH)$, i.e. linear operators on $\sH$,
and proceeding as before, one may consider the
internal Boolean algebra $P(\u\sA)$ of projections: it is the functor
sending $A \in \sC$ to the Boolean algebra $P(A)$ of projections in
$A$. This is not a frame yet, as $P(\u\sA)$ is not complete (even
though its values are finite Boolean algebras, thus complete,
$P(\u\sA)$ itself is only so-called $\tilde K$-finite, which does not
suffice to establish completeness internally). One thus applies the
operation of internal \emph{ideal completion}, obtaining a frame $\Idl
P(\u\sA) \supset P(\u\sA)$ defined by a suitable universal property.
The internal phase space is the corresponding locale (in fact, the
passage from the internal Boolean algebra $P(\u\sA)$ to the latter
locale is an internal version of the \emph{Stone spectrum}). An
important result, relying on Gleason's theorem~\cite{gleason}, is that
internal probability valuations on $\Idl P(\u\sA)$ are in a natural
one-to-one correspondence with density matrices on $\sH$ for $\dim \sH
\ge3$.

Of course, the above description may be restated without
any reference to $B(\sH)$. Namely, one lets $\sC$ be set of \emph{orthogonal decompositions}
of $\sH$, partially ordered by refinement, and then considers the functor $L : \sC \to \Bool$
sending a given orthogonal decomposition $\sH = \bigoplus_{i\in I} V_i$
to $2^I$; a refinement of $(V_i)_{i\in I}$ to $(V_j)_{j\in J}$
is sent to the homomorphism $2^I \to 2^J$ induced by the inclusion map $J \to I$.
Now, $L$ is isomorphic to the previous Boolean algebra $P(\u\sA)$ and
its internal ideal completion $\Idl L$ corresponds to the internal phase space.
\label{ss:keyin}
The key insight we take from this discussion is: (1) $\sC$ is the set of \emph{compatible
measurement contexts}, partially ordered by refinement of available information;
(2) the Boolean algebras describing the \emph{logic of measurements} in each context
give rise to an internal Boolean algebra in $\Set^\sC$; (3) the internal ideal completion
of the latter corresponds to the internal phase space of the general system subject to
our measurement arrangement; (4) the states of the system are identified with internal probability
valuations.

\subsection{Box worlds}
Shifting the view away from logic, one recognizes the probabilistic
essence of quantum mechanics as a theory prescribing,
for a given collection of admissible measurements, the convex
set of states assigning a `probability' to each measurement
and each outcome. A partial `conjunction' allows for performing
several \emph{compatible} measurements simultaneously, and
combining their outcomes. The classical scenario illustrating
these notions involves two parties in causally separated laboratories,
each equipped with a pair
of binary measurements to be performed on a shared distributed
system. In course of the experiment,
each party subjects the system to one of the
measurements at its disposal, and records the outcome.
Assuming the initial state of the system may be consistently
reproduced, the parties repeat the experiment and eventually
communicate to assemble a list of `empirical probabilities'
for each combined outcome. The assumption that the parties
be causally separated guarantees that the measurement
chosen by the first party is always compatible with that
chosen by the second party.

The \emph{non-local} correlations
between the results of measurements in the two laboratories
distinguish a non-classical theory (more precisely,
non-classical states) from a classical one.
One expression detecting non-locality forms the
CHSH inequality:
$$
\langle a_1 b_1\rangle + \langle a_1 b_2 \rangle
+ \langle a_2 b_1 \rangle - \langle a_2 b_2 \rangle
\le 2.
$$
Here we encode the binary measurements
in terms of $\pm1$-valued observables:
$a_1, a_2$ in
the first laboratory, $b_1, b_2$ in the second one;
$\langle \cdot \rangle$
denotes the expectation in a given state $P$:
$$
\langle a_i b_j \rangle = \sum_{\alpha,\beta = \pm1}
P(a_i=\alpha \wedge b_j=\beta)\alpha\beta.$$
If the inequality
is satisfied, the state is consistent with a classical theory;
otherwise, it describes a genuinely non-classical situation.
Interestingly, the states described by standard quantum
mechanics \emph{do not} saturate the obvious algebraic
upper bound for the CHSH expression, i.e. $4$: instead,
they satisfy an upper bound of $2\sqrt2$~\cite{tsirelson}.

One may ponder whether this is indeed the bound chosen
by nature. Seeking an axiomatics for general `probabilities'
of the form $P(a_i=\alpha \wedge b_j = \beta)$, one would
certainly like to ensure that the one-party
expectation $\langle a_i\rangle$,
or the probability $P(a_i=\alpha)$, may be consistently inferred
as a marginal.
That is, that
$$
P(a_i=\alpha \wedge b_1=1) +
P(a_i=\alpha \wedge b_1=-1) =
P(a_i=\alpha \wedge b_2=1) +
P(a_i=\alpha \wedge b_2=-1)
$$
for all $i=1,2$ and $\alpha=\pm1$: a property
referred to as \emph{non-signalling} (for otherwise
the first laboratory would obtain information
on the \emph{choice} of a measurement at the second laboratory,
violating causality). Popescu and Rohrlich~\cite{popescu-rohrlich} provided the first
example of a non-signalling state $P$ for which
the CHSH expression achieves its algebraic maximum of $4$.
Their construction, referred to as the \emph{PR box},
has developed into the study of `non-signalling boxes' or `box
worlds', modelling \emph{super-quantum}
correlations.

\subsection{}\label{ss:prelim-pr}
Let us attempt a preliminary `Bohrification' of the above box-world. Following
the `key insight' of \ref{ss:keyin}, we first identify the partially ordered set
$\sC$ of compatible measurement contexts. As indicated in the description of
the measurement protocol, each party selects one measurement out of two, whence
we obtain four contexts represented by subsets
$$ \{ a_1,b_1\},\ \{ a_1, b_2\},\ \{ a_2, b_1\},\ \{ a_2,b_2\}\ \subset\ \{a_1,a_2,b_1,b_2\}. $$
Furthermore, since we have assumed that expectations of the form $\langle a_i\rangle$,
$\langle b_j\rangle$ may be \emph{consistently} obtained as marginals, we may add another
four contexts
$$ \{ a_1\}, \{a_2\}, \{b_1\}, \{b_2\}\ \subset\ \{a_1,a_2,b_1,b_2\}. $$
These might be alternatively interpreted as corresponding to a situation where either (1) only one
party performs a measurement, or (2) both parties perform measurements, but one of them discards
the result. Finally, we may for completeness add the empty context $\emptyset$, corresponding to
either no measurements being taken, or all results being discarded. It follows that the partially
ordered set $\sC$ of measurement contexts is a subset of the power-set $2^{\{a_1,a_2,b_1,b_2\}}$
consisting of subsets whose intersection with $\{a_1,a_2\}$ and $\{b_1,b_2\}$ is of cardinality
at most one. In particular, viewed as a category, $\sC$ is a sub-category of $\Set$.

We now need to construct a functor assigning to each context $S \in \sC$ a
Boolean algebra representing the measurement logic in that context. Since $S$ is a set
of compatible binary measurements, it is immediate that the set of possible \emph{outcomes}
if $2^S$. The logic of measurements is the power-set of the set of outcomes, in our case
$2^{2^S}$. Equivalently, it is simply the \emph{free Boolean algebra} on $S$. Hence, the
desired functor $L : \sC \to \Bool$ is simply a restriction of the free functor
$\Set \to \Bool$ to $\sC \subset \Set$ (in particular, for $S \le S'$ in $\sC$ we
obtain a homomorphism $L(S \to S') : LS \to LS'$ of measurement logics, induced by
the restriction map $2^{S} \to 2^{S'}$ on measurement outcomes). Viewed as an object
of $\Set^\sC$, the functor $L$ is an internal Boolean algebra.

We may also describe the internal frame $\Idl L$ corresponding to the internal phase space.
For now, it will be enough to look at its \emph{global sections}, or alternatively at its
value at $\emptyset \in \sC$. This is the set of \emph{sub-functors} $I \subset L$ such that
for each $S \in \sC$, the subset $I(S) \subset L(S)$ is an order ideal. Since each
$L(S)$ is finite, $I(S)$ is necessarily principal, i.e. a \emph{lower} set $\dn x$
for some $x \in L(S)$. Thus, we may identify $(\Idl L)(\emptyset)$ with the set of
maps $\xi : \sC \to \coprod_{S\in\sC} L(S)$ such that $\xi(S) \in L(S)$ for all
$S \in \sC$ and $L(S\to S')\xi(S) \le \xi(S')$ whenever $S \le S'$ in $\sC$. Alternatively,
$(\Idl L)(\emptyset)$ is the set of all \emph{upper} subsets of $X = \coprod_{S\in \sC} 2^S$.
Here $X$ is viewed as the set of \emph{measurement outcomes}, fibred over
the set of measurement contexts $\sC$, and partially ordered by refinement
($f' : S' \to 2$ refines $f : S \to 2$ if $S \subset S'$ and $f'|_S = f$). We may thus finally
say that $(\Idl L)(\emptyset)$ is the \emph{frame} $\OO(X)$ of open subsets in $X$ equipped with
the Alexandrov topology. Accordingly, $X$ is the \emph{external phase space} of our box world
(this description of the external phase space is taken from Heunen et al.,~\cite{heunen-landsman-spitters}).

\section{Categorical preliminaries}

\subsection{}
The logical structure of $L$, $\Idl L$ and lastly $X$ may be interesting, but it may only be
justified by its compatibility with the probabilistic content of the box world. We will show
in the next section that probability valuations on $\Idl L$ (or on $L$) are indeed in
one-to-one correspondence with non-signalling box-world states. For that purpose, we will
need to make these notions more precise.

\subsection{} Recall that a category $\sE$ (for convenience assumed to be locally small)
is a topos if it has finite limits and colimits,
exponential objects, and a sub-object classifier. Note that the existence
of finite limits and colimits implies the existence of a terminal object
(denoted $1_\sE$) and initial object (denoted $0_\sE$). The existence of exponential objects
means that for each object $X$ of $\sE$, the functor $X \times - : \sE \to \sE$
has a right adjoint $-^X : \sE \to \sE$. A sub-object classifier is an object
$\Omega$ representing the functor $\Sub : \sE^\op \to \Set$ of sub-objects. That is,
for each object $X$ of $\sE$ there is a natural bijection between $\Hom_\sE(X,\Omega)$
and the set of equivalence classes of monomorphisms into $X$. The sub-object classifier
is unique up to unique isomorphism.

The basic example of a topos is of course the category $\Set$ of sets, with the
two-point set $2$ as a sub-object classifier.
The topoi that arise in the context of Bohrification are still of a very simple kind. Namely,
given a partially ordered set $\sC$, we identify it with a category (its objects are the elements
of $\sC$, and its hom-sets are either empty or singletons, reflecting the order relation).
Then, we form the \emph{Kripke topos} $\Set^\sC$:
its objects are functors $\sC \to \Set$, and its morphisms are natural transformations
(equivalently, the objects may be viewed as $\sC$-shaped diagrams of sets). One generically
interprets $\sC$ as a set of `contexts' ordered by `refinement'; then, an object of
the corresponding Kripke topos is a `context-dependent set', transforming as the context
is being refined. Limits and colimits in $\Set^\sC$ are computed `point-wise', i.e.
independently for each context. The terminal (resp. initial) object is a functor sending each context
to a singleton (resp. empty set). A sub-object of a functor $F:\sC \to \Set$
is just a sub-functor $F'$, i.e. a subset $F'(c) \subset F(c)$ for each $c \in \sC$
such that $F'(c)$ maps to $F'(c')$ whenever $c \le c'$ in $\sC$.
It is then easy to see that a sub-object classifier is given by the functor
$\Omega : \sC \to \Set$ sending $c \in \sC$ to the set of upper subsets contained in
$\up c$.

\subsection{}
Given a pair of topoi $\sE$, $\sE'$, a \emph{geometric morphism} $f : \sE \to \sE'$
is a pair of functors $f_* : \sE \leftrightarrows \sE' : f^*$ such that
$f^*$ is left adjoint to $f_*$ and preserves finie limits
(the notation is of course reminiscent of
the direct and inverse image functors for sheaves). Such $f$ is furthermore \emph{essential}
if $f^*$ possesses a further left adjoint, denoted $f_!$. Every topos admits
at most a unique (up to equivalence) geometric morphism to $\Set$.

Geometric morphisms of Kripke topoi arise from isotone maps on the underlying posets.
Indeed, given an isotone map $f : \sC \to \sC'$ (i.e. a functor when
posets are viewed as categories), we define
a geometric morphism $f : \Set^\sC \to \Set^{\sC'}$ where, for functors
$F : \sC \to \Set$ and $F' : \sC' \to \Set$, we have
$f^*F' = F' \circ f$
and
$$(f_*F)(c') = \lim_{f(c)=c'} F(c)$$
for all $c' \in \sC'$ (the action of $f_*F$ on arrows of $\sC$ is easy to deduce). In fact,
$f$ is essential with
$$
(f_!F)(c') = \colim_{f(c)=c'} F(c).
$$
In particular, identifying $\Set$ with the Kripke topos over the one-element poset $\ast$,
we always have the essential geometric morphism $f:\Set^\sC \to \Set$ induced
by the unique map $f:\sC \to \ast$. Its components $f_*, f_! : \Set^\sC \to \Set$
compute, respectively, limits and colimits of functors, while $f^* : \Set \to \Set^\sC$
gives the constant functor on a given set.

\subsection{}
A \emph{partially ordered object} in $\sE$ is a pair $(X,\le_X)$,
where $X$ is an object together with a sub-object $(\le_X) \subset X \times X$
such that (1) the intersection of $(\le_X)$ with its `transpose' is the
diagonal $\Delta_X \subset X \times X$, and (2) $\pr_{13} : (\le_X) \times_X (\le_X) \to X \times X$
factors through $(\le_X)$,  where the fibre product is with respect to the
right projection from the left factor and left projection from the right factor. These
two conditions are equivalent to requiring that the representable functor
$\Hom(-,X) : \sE^\op \to \Set$ factor through the category $\Ord$ of partially
ordered sets, where the order relation on $\Hom(Y,X)$ is the preimage of
$\Hom(Y, \le_X)$ under the map $\Hom(Y,X)\times \Hom(Y,X) \to \Hom(Y,X\times X)$.
A morphism between
$X \to Y$ between partially ordered objects is \emph{isotone}
if it maps $\le_X$ into $\le_Y$.  We let $\Ord_\sE$ denote
the category of partially ordered objects in $\sE$ and isotone morphisms.
The power-object functor $\Omega^- : \sE^\op \to \sE$ factors naturally
through $\Ord_\sE$, where the ordering on $\Omega^X$ is given by
factorisation, i.e. inclusion, of sub-objects of $X$. In particular,
$\Omega$ itself is naturally a partially ordered object. In a Kripke topos
$\Set^\sC$, partially ordered objects are simply functors $\sC \to \Ord$.

The object $(X,\le_X)$ of $\Ord_\sE$ is an \emph{internal distributive lattice}
if the representable functor $\Hom(-,X) : \sE^\op \to \Ord$ factors through
the forgetful functor from the category $\dLat$ of bounded distributive lattices.
If that is the case, there exist morphisms $\wedge_X, \vee_X : X \times X \to X$ such that
for each $Y$, the induced maps $\Hom(Y,X) \times \Hom(Y,X) \to \Hom(Y,X)$ are the
meet and join in $\Hom(Y,X)$. There are also
morphisms $0_X, 1_X : 1_\sE \to X$ such that for each $Y$, the induced elements
in $\Hom(Y,X)$ are the bottom and top.
An isotone morphism $\phi : X \to Y$ between
distributive lattices is a \emph{distributive lattice homomorphism} if
$\vee_Y \circ (\phi\times\phi) = \phi \circ \wedge_X$
and likewise for $\wedge$, and if
$\phi \circ 0_X = 0_Y$ and likewise for $1$ (equivalently
if $\Hom(Z,\phi)$ is a distributive lattice homomorphism
for each $Z$). The category of internal distributive lattices
and their homomorphisms in $\sE$ is denoted $\dLat_\sE$. It is a sub-category
of $\Ord_\sE$. Again, the power-object functor factors naturally through $\dLat_\sE$ and in particular
$\Omega$ is itself an internal distributive lattice.
In a Kripke topos $\Set^\sC$, internal distributive lattices are simply functors
$\sC \to \dLat$.

Finally, $(X,\le_X)$ is an \emph{internal Boolean algebra}
if $\Hom(-,X) : \sE^\op \to \Ord$ factors through the category
$\Bool$ of Boolean algebras. That is, if it is an internal distributive
lattice together with an antitone negation/complementation morphism $\neg_X : X \to X$
such that $\vee_X \circ (\id_X \times \neg_X) \circ \Delta_X$
factors through $1_X$, and an analogous composite with $\wedge_X$
factors through $0_X$. Internal Boolean algebras form a full subcategory
$\Bool_\sE$ of $\dLat_\sE$. In a Kripke topos $\Set^\sC$ these are
simply functors $\sC \to \Bool$.

\subsection{}
As completeness is a higher-order notion, the definition of internal
frames (as well as internal dcpo) is somewhat more involved.
We use monadicity to keep the discussion explicit.
Recall that
a \emph{monad} on a category $\sK$ is a triple $\TT=(T,\mu,\eta)$
where $T$ is an endofunctor of $\sK$ together with natural transformations
$\mu : T^2 \to T$ (`multiplication'), $\eta : \id \to T$ (`unit') such that
$\mu\circ T\eta = T\id$ and $\mu\circ T\mu = \mu \circ \mu T$.
A  \emph{$\TT$-algebra}
is a pair $(X,\xi)$ where $X$ is an object, and
$\xi : TX \to X$ a morphism (`structure map') such that $\xi\circ \eta_X = \id_X$ and
$\xi\circ\mu = \xi \circ T\xi$. A homomorphism of $\TT$-algebras $(X,\xi) \to (Y,\eta)$
is a morphism $\phi : X \to Y$ such that $\eta\circ T\phi = \phi\circ\xi$. The
category of $\TT$-algebras and homomorphisms is the \emph{Eilenberg-Moore category}
of $\sK$, denoted $\sK^\TT$. Note that for any $X$ in $\sK$, there is a natural
$\TT$-algebra structure on $TX$ given by $\eta_X$. This is the `free' $\TT$-algebra
`generated' by $X$: indeed, the functor $\sK \to \sK^\TT$ sending $X$ to $(TX,\mu_X)$
is left adjoint to the obvious forgetful functor. Finally, given a monad $\TT$ on
$\sK$ and a monad $\TT'$
on another category $\sK'$, a \emph{morphism of monads}
from $\TT$ to $\TT'$ is a functor $\Phi : \sK \to \sK'$ together with a
natural transformation $\phi : T'\Phi \to T$ compatible with the monad structures
in a rather obvious sense. The pair $(\Phi,\phi)$ induces a functor
$\sK^\TT \to \sK'^{\TT'}$ of Eilenberg-Moore algebras, where
given a $\TT$-algebra $(X,\xi)$ in $\sK$, the object $\Phi X$ becomes
a $\TT'$-algebra via the composite $\Phi\xi \circ\phi_X : T'\Phi X \to \Phi X$.

Consider the \emph{internal ideal completion} functor $\Idl : \Ord_\sE \to \Ord_\sE$.
Its action on objects sends $X$ to the sub-object of $\Omega^X$ parameterising
order ideals: i.e., a morphism $\phi : T \to \Omega^X$ factors through
$\Idl X$ if the corresponding sub-object $\Phi \subset T \times X$ satisfies:
\begin{enumerate}
\item (lower) $\id_T \times \pr_1 : (\le_X) \times_{X} \Phi \to T \times X$
(fibre product using $\pr_2$) factors through $\Phi$,
\item (directed) for any $\alpha : T' \to T$
and $x_1,x_2 : T' \to \Phi$ over $T$, there is
an epi $\beta : T'' \to T'$ and $x : T'' \to \Phi$
such that $\beta^*x_1, \beta^*x_2 \le x$.
\end{enumerate}
The partial order on $\Idl X$ is induced from $\Omega^X$.
The action on morphisms sends an isotone $\phi : X \to Y$
to the restriction of the direct image morphism $\phi_* : \Omega^X \to \Omega^Y$
to $\Idl X \to \Idl Y$. There is a natural transformation
$\eta:\id \to \Idl$ such that $\eta_X : X \to \Idl X \subset \Omega^X$
is adjoint to the characteristic morphism $X \times X \to \Omega$
classifying the transpose of $(\le_X)$. (For $\sE=\Set$, this map
sends $x \in X$ to the principal ideal $\dn x$.) We also have a natural transformation
$\mu : \Idl^2 \to \Idl$ such that $\mu_X : \Idl^2 X \to \Idl X$ is a restriction
of the `union' morphism $\Omega^{\Omega^X} \to \Omega^X$.

One checks that (1) $\IIdl=(\Idl,\mu,\eta)$ is a monad on $\Ord_\sE$,
(2) if a partially ordered object $X$ is an internal distributive lattice,
then so is $\Idl X$, and furthermore $\mu_X$ and $\eta_X$ are homomorphisms
of distributive lattices, so that $(\Idl,\mu,\eta)$ defines a monad $\IIdl'$ on $\dLat_\sE$.
Their Eilenberg-Moore categories are, respectively, the internal dcpo and
internal frames: $\Ord_\sE^\IIdl = \Dcpo_\sE$ and
$\dLat_\sE^{\IIdl'} = \Frm_\sE$. (For $\sE=\Set$, these are the usual
categories of dcpo, resp. frames, with Scott-continuous maps as morphisms.)

In a Kripke topos $\Set^\sC$, the internal ideal completion
$\Idl S$ of a functor $S : \sC \to \Ord$ sends
$c \in \sC$ to the poset of sub-functors $J \subset S|_{\up c}$
such that $J(c')$ is an order ideal in $S(c')$ for all $c'\ge c$ in $\sC$.
In particular, for each $c \in \sC$ and each $s \in S(c)$
we may consider the sub-functor $\dn s \subset S$ sending
$c' \in \sC$ to $\dn S(c\to c')s$ if $c' \ge c$ and
to $\{0\}$ otherwise. This yields an isotone map $S(c) \to f_* \Idl S$
where $f : \Set^\sC \to \Set$ is the usual geometric morphism.

\subsection{}
Let $f : \sE \to \sE'$ be a geometric morphism. We lift both $f^*$ and $f_*$ to
an adjoint pair of functors $f_* : \Ord_\sE \rightleftarrows \Ord_{\sE'} : f^*$,
using the fact that both preserve finite limits.
Given $(X, \le_X)$ in $\Ord_\sE$, we let $f_*$ send it
to $(f_*X, f_*(\le_X))$.
Given $(X', \le_{X'})$ in $\Ord_{\sE'}$
we let $f^*$ send it to $(f^*, f^*(\le_{X'}))$.
Furthermore, these functors restrict to functors between
the sub-categories of internal distributive lattices
and internal Boolean algebras.

The situation is more complicated for internal frames and internal dcpo.
Let us first observe that the canonical morphism $\Omega_{\sE'}^{f_* X}
\to f_* \Omega_{\sE}^X$ induces a natural transformation
$\Idl_{\sE'} f_* \to f_* \Idl_{\sE}$ of functors $\Ord_\sE \to \Ord_{\sE'}$,
inducing a morphism of monads
from $\Ord_\sE$ to $\Ord_{\sE'}$.
It follows that, given an $\IIdl_\sE$-algebra $X$ in $\Ord_\sE$, we
may turn $f_*X$ into an $\IIdl_{\sE'}$-algebra in $\Ord_{\sE'}$ with a composite
structure morphism
$$\Idl_{\sE'} f_* X \to f_* \Idl_\sE X
\to f_* X.$$
An analogous construction equips the direct image of an $\IIdl'_\sE$-algebra
in $\dLat_{\sE}$ with the structure of an $\IIdl'_{\sE'}$-algebra in
$\dLat_{\sE'}$. Hence, the direct image functor lifts to
$f_*:\Dcpo_\sE \to \Dcpo_{\sE'}$
and $f_*:\Frm_\sE\to
 \Frm_{\sE'}$.
The inverse image functor does not have this property.
However, there does exist a left adjoint to the direct image functor on
dcpo, resp. frames, denoted $f^\sharp : \Dcpo_{\sE'} \to \Dcpo_\sE$,
resp. $f^\sharp : \Frm_{\sE'} \to \Frm_\sE$.

\subsection{}
A \emph{natural numbers object} in a general topos $\sE$ is
an object $\NN_\sE$ equipped with morphisms $0:1_\sE \to \NN_\sE$ (`zero') and
$S:\NN_\sE \to \NN_\sE$ (`successor map'),
satisfying a universal property: for every
$X$ equipped with morphisms $x : 1_\sE \to X$ and $\varphi : X \to X$
there is a unique morphism $\Phi : \NN_\sE \to X$ such that $\Phi\circ 0 = x$
and $\Phi\circ S = \varphi \circ \Phi$. The morphism $\Phi$ describes
the sequence of images of $x$ under subsequent iterates of $\varphi$.
In particular, iterating the composition-with-$S$
as an endomorphism of $\NN_\sE^{\NN_\sE}$ gives rise to
an addition $\NN_\sE \times \NN_\sE \to \NN_\sE$.
This makes $\NN_\sE$ (if it exists) into
a monoid object, and we may
proceed to construct the integers $\ZZ_\sE$ as a quotient of
$\NN_\sE \times \NN_\sE$ by the diagonal action of $\NN_\sE$.
The sub-object $0:1_\sE \to \ZZ_\sE$ admits a
complement $\ZZ_\sE^\times$; the latter is a monoid object
for `multiplication' and we obtain the rationals
$\QQ_\sE$ as a quotient of $\ZZ_\sE \times \ZZ_\sE^\times$ by the the diagonal
action of $\ZZ_\sE^\times$.
If $\sE$ admits a geometric morphism $f : \sE \to \Set$, we have simply
$\NN_\sE \simeq f^*\NN$, $\ZZ_\sE \simeq f^*\ZZ$ and $\QQ_\sE \simeq f^*\QQ$.
In particular, in a Kripke topos $\Set^\sC$ the natural numbers, integers and rationals
are simply the corresponding constant functors.

The construction of the reals is more involved.
Observe first that the objects of
natural numbers, integers and rationals are naturally partially ordered
(and their inclusions are isotone). We may consider the
\emph{lower reals} $\RR_{\ell,\sE}$ as a sub-object of $\Omega^{\QQ_\sE}$
such that a morphism $\varphi : T \to \Omega^{\QQ_\sE}$ factors through
$\RR_{\ell,\sE}$ if and only if the corresponding sub-object
$\Phi \subset T \times \QQ_\sE$ satisfies:
\begin{enumerate}
\item (lower) $\id_T \times \pr_1 : (\le_{\QQ_\sE}) \times_{\QQ_\sE} \Phi \to T \times \QQ_\sE$
(fibre product using $\pr_2$) factors through $\Phi$,
\item (rounded) for each $q' : T' \to \Phi$
there is an epimorphism $\beta:T'' \to T'$ and morphism $q'' : T'' \to \Phi$
such that $\beta^*q' \le q''$ and
the equaliser of these two is  $0_\sE$,
\item (epic) the projection $\Phi \to T$ is epic,
\item (bounded) there is an epi $\alpha:T' \to T$
and a morphism $b : T' \to \QQ_\sE$ such that
$\alpha^*\Phi \le b$ in the obvious sense.
\end{enumerate}
Without the last two conditions, we would obtain the \emph{extended} lower reals.
Finally, the \emph{Dedekind reals} $\RR_{d,\sE}$ may be constructed as a sub-object
of $\RR_{\ell,\sE} \times \Omega^{\QQ_\sE}$ parameterising pairs consisting of a
lower real and a complement. More precisely, a morphism
$\varphi : T \to \RR_{\ell,\sE}\times\Omega^{\QQ_\sE}$ is a Dedekind real
if its two components $\varphi_1 : T \to \RR_{\ell,\sE} \subset \Omega^{\QQ_\sE}$ and
$\varphi_2 : T \to \Omega^{\QQ_\sE}$ define complementary sub-objects
of $T \times \QQ_{\sE}$. The Dedekind reals are an honest ring object
in $\sE$; the lower reals are an additive monoid object, and carry an
action of the multiplicative monoid object of positive lower reals. They are also
naturally partially ordered in a compatible way.

These constructions have a simple interpretation in a Kripke topos
$\Set^\sC$. The lower reals here are `context-dependent' inhabited, bounded, rounded lower
subsets of $\QQ$ that `grow' as the context is being refined: that is,
the functor $\RR_\ell$ maps $c \in \sC$ to the set of isotone
maps from $\up c$ to the (usual) reals. On the other hand, the Dedekind reals
are the constant functor $f^*\RR$ (for the geometric morphism $f : \Set^\sC \to \Set$).
Indeed, a Dedekind real consists of a lower real and a complement -- the latter
is also a `context-dependent' subset that `grows', so that in effect both halves
of the Dedekind cut are forced to remain (locally) constant.

\section{Box presentations and framed topoi}

\subsection{Box worlds}
A \emph{box world} is an abstraction of an experiment in which a single system is shared
among a number of parties. Each party has a collection of yes--no `questions', i.e. binary
measurements, it may subject the system to. It is assumed that during a single run of the experiment,
the system had been prepared in a certain state, and each party chooses and asks
at most one `question', and that the order in which the parties perform their measurements
is irrelevant (the latter is usually expressed by saying that the parties are causally separated).
This procedure is then repeated, with the system consistently being prepared in the same state
prior to observation. Each party records its choice of a question and the answer obtained,
and eventually the records are compiled into a list of pairs, each consisting of
a sequence of questions (`context') and a sequence of answers (`outcome').
Given sufficiently many records, and a fixed context, empirical probabilities are assigned
to the possible outcomes.

We formalise this situation as a map $S \to I$ of \emph{finite} sets, alternatively viewed as a family
$(S_i)_{i\in I}$ of fibres. The set $I$ indexes the parties, while $S_i$ is the collection
of questions available to the $i$-th party.
A \emph{context} is then a \emph{partial section},
i.e. a subset $c \subset S$ such that the projection
$c \to I$ is injective.
The set of contexts $\sC_{S/I}$ is then a subset
of the power-set of $S$, and thus inherits a partial order.
We may also view $\sC_{S/I}$ as a sub-category
of $\Set$. In particular,
restricting the free Boolean algebra functor $\Set \to \Bool$ to $\sC_{S/I}$
yields the functor
$$L_{S/I}  : \sC_{S/I} \to \Bool. $$
The latter is an internal Boolean algebra in the Kripke topos
$$\sE_{S/I} = \Set^{\sC_{S/I}}. $$
Finally, applying the internal ideal completion
viewed as a free frame functor $\Idl : \dLat_{\sE_{S/I}} \to \Frm_{\sE_{S/I}}$
yields an internal frame
$$ F_{S/I} = \Idl L_{S/I}. $$

\subsection{Functoriality}
We have thus associated with $S/I$ the pair $(\sE_{S/I}, F_{S/I})$ of
a topos and an internal frame. We shall now make it functorial.

\begin{defn}
A \emph{framed topos} is a pair $(\sE, F)$ where $\sE$ is a topos and $F$ is an internal
frame in $\sE$. A \emph{homomorphism of framed topoi} from $(\sE,F)$ to $(\sE', F')$ is a
geometric morphism $f : \sE \to \sE'$ together with a homomorphism $f^\sharp F' \to F$
of internal frames in $\sE$. The category of \emph{framed topoi} is denoted $\FrmTop$.
\end{defn}

\begin{defn}
The category $\BBox$ of \emph{box presentations}
is a sub-category of the arrow category $\fSet^\to$
with maps $S \to I$ of finite sets as objects, and
commutative diagrams
$$ \begin{CD}
S' @>>> S \\
@VVV @VVV \\
I' @>>> I
\end{CD} $$
such that $S'_{i'} \to S$ is injective for all $i'\in I'$ as morphisms.
\end{defn}

\begin{pro}\label{pro:functor}
The assignment sending $S/I$ to $(\sE_{S/I}, F_{S/I})$ is the object part of
a functor $\BBox^\op \to \FrmTop$.
\end{pro}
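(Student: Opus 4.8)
The plan is to send each arrow of $\BBox$ to a homomorphism of framed topoi in the opposite direction, and then to check compatibility with identities and composition. So fix a morphism in $\BBox$ from $S'/I'$ to $S/I$, i.e. maps $g:S'\to S$ and $h:I'\to I$ forming a commuting square, with every fibre map $S'_{i'}\to S$ injective. Reading this as an arrow $S/I\to S'/I'$ of $\BBox^\op$, I must produce a $\FrmTop$-morphism $(\sE_{S/I},F_{S/I})\to(\sE_{S'/I'},F_{S'/I'})$, namely a geometric morphism $f:\sE_{S/I}\to\sE_{S'/I'}$ together with a frame homomorphism $f^\sharp F_{S'/I'}\to F_{S/I}$ in $\sE_{S/I}$.

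For the geometric morphism I take the isotone map $\psi:\sC_{S/I}\to\sC_{S'/I'}$ given by preimage, $\psi(c)=g^{-1}(c)$. The only thing to check is that $g^{-1}(c)$ is again a context, i.e. a partial section of $S'\to I'$, and this is exactly where the fibrewise injectivity is used: commutativity of the square forces $g$ to carry the fibre $S'_{i'}$ into the fibre $S_{h(i')}$, so $g^{-1}(c)\cap S'_{i'}$ is the preimage under an injection of the at-most-one-element set $c\cap S_{h(i')}$, hence has at most one element, for every $i'\in I'$. Preimage is monotone, so $\psi$ is isotone and, as in the excerpt, induces a geometric morphism $f:\sE_{S/I}\to\sE_{S'/I'}$ with $f^*=(-)\circ\psi$.

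For the frame datum I descend to Boolean algebras. The restriction of $g$ is a map of sets $g^{-1}(c)\to c$, inducing a Boolean homomorphism $L_{S'/I'}(g^{-1}(c))\to L_{S/I}(c)$ of free Boolean algebras --- concretely the preimage map $2^{2^{g^{-1}(c)}}\to 2^{2^{c}}$ along the restriction of outcomes $2^{c}\to 2^{g^{-1}(c)}$. Functoriality of the free Boolean algebra functor makes these natural in $c$, so, since $(f^*L_{S'/I'})(c)=L_{S'/I'}(g^{-1}(c))$, they assemble into an internal Boolean homomorphism $f^*L_{S'/I'}\to L_{S/I}$; transposing along $f^*\dashv f_*$ (which restricts to internal Boolean algebras) yields $\lambda:L_{S'/I'}\to f_*L_{S/I}$ in $\sE_{S'/I'}$. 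Now a frame homomorphism $f^\sharp F_{S'/I'}\to F_{S/I}$ is, by $f^\sharp\dashv f_*$, the same datum as a frame homomorphism $F_{S'/I'}\to f_*F_{S/I}$. Composing $\lambda$ with $f_*$ applied to the ideal-completion unit $\eta:L_{S/I}\to\Idl L_{S/I}=F_{S/I}$ gives a distributive-lattice homomorphism $L_{S'/I'}\to U f_*F_{S/I}$ (here $U$ is the forgetful functor to $\dLat$, with which $f_*$ commutes by its construction through the monad morphism $\Idl f_*\to f_*\Idl$). Since $\Idl$ is the free internal frame on an internal distributive lattice, this extends uniquely to a frame homomorphism $F_{S'/I'}=\Idl L_{S'/I'}\to f_*F_{S/I}$, supplying the required frame datum.

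It remains to check functoriality. For the topos component this is immediate: the identity of $\BBox$ gives $\psi=\id$ and $f=\id$, while a composite has underlying map $g\circ g'$ with $(gg')^{-1}=g'^{-1}\circ g^{-1}$, so the geometric morphisms compose in the order dictated by $\op$. For the frame component, the uniqueness clause of the free-frame adjunction reduces the verification to the corresponding identity for the Boolean homomorphisms $\lambda$, which follows from naturality of the adjunction transposes. The genuine labour --- and what I expect to be the main obstacle --- is keeping the several transpositions coherent as the presentation varies: one must use the pseudofunctoriality of $f\mapsto f_*$ and the compatibility of the monad morphism $\Idl f_*\to f_*\Idl$ with composition of geometric morphisms. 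All of the conceptual content, by contrast, is already present in the well-definedness of $\psi$ (using injectivity) and of $\lambda$.
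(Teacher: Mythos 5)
Your proposal is correct and follows essentially the same route as the paper: the preimage map on contexts gives the geometric morphism, the free Boolean algebra functor applied to the restrictions $g^{-1}(c)\to c$ gives the internal Boolean homomorphism $f^*L_{S'/I'}\to L_{S/I}$, and transposing plus the $\IIdl'$-algebra structure on $f_*F_{S/I}$ (your free-extension of $\phi_*\eta\circ\lambda$ is exactly the paper's composite $\theta\circ\Idl'\lambda$ through the monad morphism) yields $f^\sharp F_{S'/I'}\to F_{S/I}$, with functoriality left as a coherence check in both treatments. Your explicit verification that fibrewise injectivity makes $g^{-1}(c)$ a context is a detail the paper leaves implicit, and is a welcome addition.
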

\begin{proof}
Suppose given a morphism in $\BBox$, i.e. a commutative diagram
$$ \begin{CD}
S' @>{\varphi}>> S \\
@VVV @VVV \\
I' @>{\bar\varphi}>> I
\end{CD} $$
of finite sets, such that $S'_{i'} \to S$
is injective for all $i'\in I$.
We obtain an isotone map
$\phi:\sC_{S/I} \to \sC_{S'/I'}$ as a restriction
of the inverse image map $2^{S'} \to 2^{S}$ on power-sets.
In fact, viewing $\phi$ as a functor between sub-categories of $\Set$,
we also have a natural transformation $\tilde\varphi: \phi^*\iota'\to\iota$
between the embedding functors $\iota : \sC_{S/I} \to \Set$
and $\iota' : \sC_{S'/I'} \to \Set$. Letting $B: \Set\to\Bool$ be the
free Boolean algebra functor, we then have a natural transformation
$B\tilde\varphi : \phi^* B\iota'\to B\iota $ of functors $\sC_{S/I} \to \Bool$.
That is nothing but a homomorphism $ \phi^* L_{S'/I'} \to L_{S/I}$ of internal
Boolean algebras in $\Set^{\sC_{S/I}}$, where we use $\phi : \Set^{\sC_{S/I}}
\to\Set^{\sC_{S'/I'}}$ to denote the geometric morphism induced by the map on posets.
Passing to the adjoint $L_{S'/I'} \to \phi_*L_{S/I}$, applying $\Idl'$ and
using the natural transformation $\Idl'_{\sE_{S'/I'}} \phi_* \to \phi_* \Idl'_{\sE_{S/I}}$
we have the composite
$$
F_{S'/I'} = \Idl' L_{S'/I'} \to \Idl' \phi_* L_{S/I} \to \phi_* \Idl' L_{S/I} = \phi_* F_{S/I}.
$$
It is a homomorphism of internal frames in $\sE_{S'/I'}$ by construction of the
$\IIdl'$-algebra structure on $\phi_* F_{S/I}$. By adjunction, we obtain an internal
frame homomorphism $\phi^\sharp F_{S'/I'} \to F_{S/I}$ in $\sE_{S/I}$.
Hence, $(\phi, \phi^\sharp)$ is a morphism in $\FrmTop$
between $(\Set^{\sC_{S/I}}, F_{S/I})$ and $(\Set^{\sC_{S'/I'}}, F_{S'/I'})$: the
value of the desired functor on the morphism $(\varphi,\bar\varphi)$ in $\BBox$ between
$S/I$ and $S'/I'$.
A careful inspection of all the constructions involved shows that composition of morphisms
in $\BBox$ is compatible with composition of morphisms in $\FrmTop$.
\end{proof}

\subsection{Products}

Observe that the category $\BBox$ has finite coproducts:
indeed, $\emptyset\to\emptyset$ is an initial object, while
the coproduct of $S/I$ and $T/J$ is $(S \sqcup T) / (I \sqcup J)$.
On the other hand, the category $\FrmTop$ does not have all finite products,
and in particular it lacks a final object.
It is convenient to restrict to its full subcategory $\bFrmTop$
of \emph{bounded} framed topoi, consisting of those $(\sE,F)$ for which
there exists a geometric morphism $f:\sE \to \Set$ and an object $B$ in $\sE$
such that every object of $\sE$ is a sub-quotient of $B \times f^*A$ for some
set $A$. Now, $\bFrmTop$ does have finite products: indeed,
$(\Set, 2)$ is the final object, while the product of $(\sE,E)$
and $(\sF, F)$ is $(\sE \times \sF, p_1^\sharp E \otimes p_2^\sharp F)$
where
$$ \sE \xleftarrow{p_1} \sE\times\sF \xrightarrow{p_2} \sF $$
is the product in the category of bounded topoi and geometric morphisms,
while $\otimes$ is the coproduct in $\Frm_{\sE\times\sF}$. It is easy to
compute in our case.
\begin{lem}\label{lem:products}
Let $\sE, \sF$ be bounded topoi, and let $L$ be an internal distributive lattice in $\sE$,
and $M$ an internal distributive lattice in $\sF$. Then the product of
$(\sE, \Idl' L)$ and $(\sF, \Idl' M)$ in $\bFrmTop$ is naturally isomorphic to
$(\sE\times \sF, \Idl'(p_1^*L\otimes p_2^*M))$ where $\otimes$ is the coproduct in
$\dLat_{\sE\times\sF}$, together with projections
$$ (\sE, \Idl' L) \xleftarrow{(p_1,p_1^\sharp)} (\sE\times\sF, \Idl' (p_1^*L \otimes p_2^*M))
\xrightarrow{(p_2,p_2^\sharp)} (\sF, \Idl' M) $$
in $\bFrmTop$ in which the frame homomorphisms
$$ p_1^\sharp \Idl' L \xrightarrow{p_1^\sharp} \Idl' (p_1^* L \otimes p_2^* M)
\xleftarrow{p_2^\sharp} p_2^\sharp \Idl' M $$
are induced, using adjunctions, by distributive lattice homomorphisms
$$p_1^* L \to p_1^*L \otimes p_2^*M \leftarrow p_2^*M$$
arising from the universal property of the coproduct.
\end{lem}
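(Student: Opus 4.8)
The plan is to reduce the abstractly-defined product in $\bFrmTop$ to the stated concrete form by invoking two compatibilities of the free frame functor $\Idl'$: that it intertwines the frame pullback $f^\sharp$ with the distributive-lattice inverse image $f^*$, and that, being a left adjoint, it preserves coproducts.

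First I would recall that, by the construction of products in $\bFrmTop$, the product of $(\sE,\Idl'L)$ and $(\sF,\Idl'M)$ is $(\sE\times\sF,\, p_1^\sharp\Idl'L \otimes p_2^\sharp\Idl'M)$, where $\otimes$ is the coproduct in $\Frm_{\sE\times\sF}$. It therefore suffices to produce a natural isomorphism of internal frames
$$ p_1^\sharp\Idl'L \otimes p_2^\sharp\Idl'M \;\cong\; \Idl'(p_1^*L \otimes p_2^*M), $$
where the inner $\otimes$ is now the coproduct in $\dLat_{\sE\times\sF}$ (recall $p_i^*$ preserves distributive lattices), and then to verify compatibility with the two projections.

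The first key step is a natural isomorphism $f^\sharp\Idl' \cong \Idl'f^*$ of functors $\dLat_{\sE'}\to\Frm_\sE$, for any geometric morphism $f:\sE\to\sE'$ of bounded topoi. This follows by a mate argument. The forgetful functors $U:\Frm\to\dLat$ commute with direct images, i.e. $U_{\sE'}f_* \cong f_* U_\sE$: indeed, the $\IIdl'$-algebra structure placed on the direct image of a frame (via the monad morphism $\Idl_{\sE'}f_* \to f_*\Idl_\sE$) has underlying distributive lattice exactly $f_*$ of the underlying lattice. Passing to left adjoints on both sides of this commuting square---$f^\sharp\dashv f_*$ on frames and $f^*\dashv f_*$ on lattices, $\Idl'\dashv U$ on each topos---yields $f^\sharp\Idl'_{\sE'}\cong\Idl'_\sE f^*$. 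Taking $f=p_1,p_2$ gives $p_i^\sharp\Idl'\cong\Idl'p_i^*$, whence
$$ p_1^\sharp\Idl'L \otimes p_2^\sharp\Idl'M \;\cong\; \Idl'(p_1^*L)\otimes\Idl'(p_2^*M). $$

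The second step is that $\Idl':\dLat_{\sE\times\sF}\to\Frm_{\sE\times\sF}$, being a left adjoint, preserves coproducts, so the coproduct in $\Frm$ of the free frames on $p_1^*L$ and $p_2^*M$ is the free frame on their coproduct in $\dLat$:
$$ \Idl'(p_1^*L)\otimes\Idl'(p_2^*M) \;\cong\; \Idl'(p_1^*L\otimes p_2^*M). $$
Composing the two isomorphisms gives the asserted frame. Chasing the coproduct coprojections through both isomorphisms identifies the frame homomorphisms labelled $p_i^\sharp$ with $\Idl'$ applied to the distributive lattice coprojections $p_i^*L\to p_1^*L\otimes p_2^*M$, so the projections assume exactly the stated form; naturality in $L$ and $M$ is automatic, since every map in sight is assembled from adjunction units and counits. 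The main obstacle is the commutation $U_{\sE'}f_*\cong f_*U_\sE$ underlying the first step: one must unwind the construction of the $\IIdl'$-algebra structure on $f_*A$ to confirm that its underlying lattice is precisely $f_*$ of the underlying lattice of $A$, so that the mate of the commuting square is a genuine isomorphism rather than a one-sided comparison. Once this is in hand, the remainder is formal, resting only on the fact that $\Idl'$ and the inverse image functors are left adjoints.
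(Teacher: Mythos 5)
Your proof is correct, but it takes a genuinely different route from the paper's. The paper never passes through the general product formula $(\sE\times\sF,\,p_1^\sharp E\otimes p_2^\sharp F)$: instead it verifies the universal property of the product directly. Given a test object $(\sG,F)$ with morphisms $(s,s^\sharp)$ and $(t,t^\sharp)$ to the two factors, it factors the geometric morphisms through $\langle s,t\rangle:\sG\to\sE\times\sF$, transposes the frame homomorphisms through the chain of adjunctions ($f^\sharp\dashv f_*$ on frames, free--forgetful, $p_i^*\dashv p_{i*}$ on lattices, then the counits $p_i^*p_{i*}\to\id$) until it reaches a single morphism $p_1^*L\otimes p_2^*M\to\langle s,t\rangle_*F$ in $\dLat_{\sE\times\sF}$, and then reassembles this into the required factoring morphism, with uniqueness coming from the universal property of the lattice coproduct. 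Your argument instead quotes the product formula asserted in the paper just before the lemma, reduces the claim to an isomorphism of internal frames, and establishes that isomorphism by two formal facts: the mate isomorphism $f^\sharp\Idl'\cong\Idl' f^*$ (which, as you correctly note, rests on the identity $U_{\sE'}f_*=f_*U_\sE$ of underlying lattices, holding on the nose by the construction of the $\IIdl'$-algebra structure on direct images), and preservation of coproducts by the left adjoint $\Idl'$. Both arguments ultimately run through the same adjunctions, but they buy different things: your version is more modular, and the isomorphism $f^\sharp\Idl'\cong\Idl' f^*$ is a reusable statement that also clarifies the constructions in Proposition \ref{pro:functor}; the paper's version is self-contained, which matters here because the product formula for $\bFrmTop$ was stated without proof --- the paper's direct verification of the universal property does not lean on it, whereas yours inherits that dependence.
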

\begin{proof}
Let $(\sG, F)$ be an object of $\bFrmTop$ together with
morphisms
$$
(\sE,\Idl' L) \xleftarrow{(s,s^\sharp)} (\sG,F) \xrightarrow{(t,t^\sharp)} (\sF,\Idl' M).
$$
The geometric morphisms $s$ and $t$ factor uniquely through $\langle s,t\rangle : \sG
\to \sE\times\sF$. The homomorphisms of internal frames
$$ p^\sharp \Idl' L \rightarrow F \leftarrow q^\sharp \Idl' M $$
correspond by adjunction to
$$ \Idl' L \to s_* F,\quad t_*F \leftarrow \Idl' M $$
and thus to
$$ L \to s_* F,\quad t_*F \leftarrow M $$
in $\dLat_\sE$ and $\dLat_\sF$. These pull back to
$$
p_1^* L \to p_1^* p_{1*} \langle s,t\rangle_* F,\quad
p_2^*p_{2*} \langle s,t\rangle_* F \leftarrow p_2^* M
$$
in $\dLat_\sG$. Composing with the counits $p_1^*p_{1*}\to \id$ and
$p_2^*p_{2*}\to \id$ we do finally obtain a morphism
$$ p_1^* L \otimes p_2^* M \to \langle s,t\rangle_* F $$
in $\dLat_\sG$, extending by universality to a frame homomorphism
from $\Idl' (p_1^* L \otimes p_2^* M)$ and giving by adjunction a morphism
$$ \langle s,t\rangle^\sharp \Idl'(p_1^* L \otimes p_2^* M) \to F $$
in $\Frm_\sG$. We have thus produced a morphism
$$ \langle (s,s^\sharp), (t,t^\sharp) \rangle : (\sG,F) \to (\sE\times\sF,
\Idl'(p_1^*L\otimes q_2^*M)) $$
in $\bFrmTop$. By construction, its composites with $(p_1,p_1^\sharp)$ and $(p_2,p_2^\sharp)$
give $(s,s^\sharp)$ and $(t,t^\sharp)$. Uniqueness follows from the universal property
of the coproduct $\otimes$ in $\dLat_\sG$.
\end{proof}

\begin{pro}
The functor of Proposition \ref{pro:functor} factors through
a functor $\BBox^\op \to \bFrmTop$ preserving finite products.
\end{pro}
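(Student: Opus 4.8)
The plan is to prove two separate things: that each $(\sE_{S/I},F_{S/I})$ is a \emph{bounded} framed topos, so that the functor of Proposition \ref{pro:functor} factors through $\bFrmTop$; and that this factored functor carries the finite coproducts of $\BBox$ (initial object and the coproducts $(S\sqcup T)/(I\sqcup J)$) to the finite products of $\bFrmTop$ supplied by Lemma \ref{lem:products}. For boundedness, note that since $S$ and $I$ are finite the poset $\sC_{S/I}$ is finite, so $\sE_{S/I}=\Set^{\sC_{S/I}}$ is a presheaf topos on a small category, hence a Grothendieck topos and a fortiori bounded over $\Set$. Concretely, writing $f:\sE_{S/I}\to\Set$ for the canonical geometric morphism, $y^c=\Hom_{\sC_{S/I}}(c,-)$ for the corepresentables, and $G=\coprod_{c\in\sC_{S/I}}y^c$ (a finite coproduct, hence a genuine object), the density presentation $\coprod_{c}\bigl(y^c\times f^*F(c)\bigr)\twoheadrightarrow F$ exhibits an arbitrary $F$ as a quotient of the subobject $\coprod_c(y^c\times f^*F(c))\hookrightarrow G\times f^*A$, where $A=\coprod_cF(c)$ and one uses distributivity of products over coproducts. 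Thus $F$ is a subquotient of $G\times f^*A$, so $(\sE_{S/I},F_{S/I})\in\bFrmTop$; since $\bFrmTop$ is full in $\FrmTop$, the factorisation follows.

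For the terminal object, the only partial section of $\emptyset\to\emptyset$ is $\emptyset$, so $\sC_{\emptyset/\emptyset}=\{\emptyset\}$, $\sE_{\emptyset/\emptyset}=\Set$, $L_{\emptyset/\emptyset}$ is the free Boolean algebra $2$ on the empty set, and $F_{\emptyset/\emptyset}=\Idl 2=2$; this is exactly the terminal $(\Set,2)$. For binary products I would first observe that a subset of $S\sqcup T$ is a partial section of $(S\sqcup T)\to(I\sqcup J)$ if and only if it splits uniquely as $c\sqcup d$ with $c\in\sC_{S/I}$ and $d\in\sC_{T/J}$, giving an isomorphism of posets $\sC_{(S\sqcup T)/(I\sqcup J)}\cong\sC_{S/I}\times\sC_{T/J}$ and hence $\sE_{(S\sqcup T)/(I\sqcup J)}=\Set^{\sC_{S/I}\times\sC_{T/J}}$. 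Recalling the standard fact that the product of two Kripke topoi over finite posets is the Kripke topos over the product poset, with projections induced by the poset projections $\pi_1,\pi_2$, this is precisely the product $\sE_{S/I}\times\sE_{T/J}$ with its projections $p_1,p_2$. By Lemma \ref{lem:products} the product framed topos is therefore $(\Set^{\sC_{S/I}\times\sC_{T/J}},\Idl'(p_1^*L_{S/I}\otimes p_2^*L_{T/J}))$, and it remains only to match the internal frame and the projection cone.

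For the frame, the coproduct $\otimes$ in $\dLat_{\Set^{\sC_{S/I}\times\sC_{T/J}}}=[\sC_{S/I}\times\sC_{T/J},\dLat]$ is computed pointwise, so at $(c,d)$ it is the coproduct of the finite Boolean algebras $L_{S/I}(c)=2^{2^c}$ and $L_{T/J}(d)=2^{2^d}$; by finite Birkhoff duality (coproducts of finite distributive lattices correspond to products of their posets of join-irreducibles) this coproduct has join-irreducibles $2^c\times 2^d=2^{c\sqcup d}$, hence equals $2^{2^{c\sqcup d}}=L_{(S\sqcup T)/(I\sqcup J)}(c\sqcup d)$, matching $L_{(S\sqcup T)/(I\sqcup J)}$ under the context identification $c\sqcup d\leftrightarrow(c,d)$. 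Applying $\Idl'$ gives $F_{(S\sqcup T)/(I\sqcup J)}\cong\Idl'(p_1^*L_{S/I}\otimes p_2^*L_{T/J})$. Finally I would check this respects the cone: by Proposition \ref{pro:functor} the coprojection $S/I\to(S\sqcup T)/(I\sqcup J)$ maps to the geometric morphism induced by $c'\mapsto c'\cap S$, i.e. the projection $\pi_1=p_1$, with frame component arising from the Boolean homomorphism $L_{S/I}(c)\to L_{(S\sqcup T)/(I\sqcup J)}(c\sqcup d)$ induced by the inclusion $c\hookrightarrow c\sqcup d$ — which is exactly the coproduct coprojection of Lemma \ref{lem:products}; symmetrically for $T/J$. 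Hence the image of the coproduct cocone is the product cone and finite products are preserved. The main obstacle is the topos-theoretic input in the middle step, namely identifying the categorical product of the two Kripke topoi with the Kripke topos on the product poset and confirming its projections are those induced by $\pi_1,\pi_2$; granting this, everything else reduces, through Lemma \ref{lem:products} and the pointwise nature of the Kripke construction, to the elementary behaviour of free Boolean algebras and finite lattice coproducts. A secondary point demanding care is the naturality of the frame isomorphism, which is what guarantees products are preserved as cones rather than merely objectwise.
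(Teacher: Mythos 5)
Your proposal is correct and takes essentially the same route as the paper's proof: both check that $\emptyset/\emptyset$ goes to the terminal object $(\Set,2)$, establish the poset isomorphism $\sC_{(S\sqcup T)/(I\sqcup J)}\cong\sC_{S/I}\times\sC_{T/J}$, identify $L_{(S\sqcup T)/(I\sqcup J)}$ with the coproduct $p_1^*L_{S/I}\otimes p_2^*L_{T/J}$ in $\dLat$ of the pulled-back lattices, and conclude by Lemma \ref{lem:products}. The only differences are in the verification details: you compute the coproduct identification pointwise via finite Birkhoff duality, whereas the paper obtains it functorially from the decomposition $\iota_U\simeq s^*\iota_S\sqcup t^*\iota_T$ and the fact that the free Boolean algebra functor turns disjoint unions into coproducts; and you spell out the boundedness of $\sE_{S/I}$ and the matching of the projection cones, both of which the paper leaves implicit.
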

\begin{proof}
Note first that the final object $\emptyset/\emptyset$
of $\BBox^\op$ is sent to the final object $(\Set,2)$ of $\bFrmTop$.
Suppose now given
$S/I$ and $T/J$, and let $U = S \sqcup T$, $K = I \sqcup J$.
A context for $U/K$ is a subset of $S \sqcup T$
such that its intersection with $S$ is a context for $S/I$
and its intersection with $T$ is a context for $T/J$; hence
$$ \sC_{U/K} \simeq \sC_{S/I} \times \sC_{T/J} $$
as posets. Let
$$ \sC_{S/I} \xleftarrow{s} \sC_{U/K} \xrightarrow{t} \sC_{T/J} $$
be the two projections, and denote by
$\iota_S : \sC_{S/I} \to \Set$, $\iota_T : \sC_{T/I} \to \Set$,
$\iota_U : \sC_{U/K} \to \Set$ the
canonical embeddings as sub-categories of $\Set$. Observe
that
$$
\iota_U \simeq s^*\iota_S \sqcup t^*\iota_T\quad\textrm{in}\ \sE_{U/K}
$$
so that
$$
L_{U/K} = B\iota_U \simeq s^* B\iota_S \otimes t^* B\iota_T \simeq s^* L_{S/I} \otimes
t^* L_{T/J},
$$
where $\otimes$ is the coproduct in $\Bool_{\sE_{\sC_{U/K}}}$, given
point-wise by the tensor product of Boolean algebras in $\Set$, and
thus coinciding with the coproduct in $\dLat_{\sE_{\sC_{U/K}}}$.
Hence, by Lemma \ref{lem:products}, $(\sE_{U/K}, \sF_{U/K})$ is the product
of $(\sE_{S/I}, F_{S/I})$ and $(\sE_{T/J}, F_{T/J})$.
\end{proof}

\subsection{Example}
The \emph{gbit} is a box world presented by the unique map $2 \to 1$.
The Popescu-Rohrlich box world is then presented by the coproduct of a pair
of gbits in $\BBox$, i.e. a map $2 \sqcup 2 \to 1 \sqcup 1$. Accordingly,
the framed topos associated to the P-R box world is the product of two copies
of the framed topos $(\sE_{2/1}, F_{2/1})$ associated with a gbit.
For a general object $(\sE,F)$ of $\bFrmTop$,
it is interesting to consider \emph{epimorphisms} to the product of
$n$ copies of $(\sE_{2/1}, F_{2/1})$: the supremum of $n$ for which such an epimorphism exists
is a certain measure of \emph{information capacity} of a hypothetical system described
by $(\sE,F)$. This in particular applies to $(\sE_{S/I}, F_{S/I})$ for a given
$S/I$ in $\BBox$.

\subsection{Remark}
Given a box presentation $S/I$ and an object $(\sE, F)$ of $\FrmTop$,
to give a morphism $(\sE, F) \to (\sE_{S/I}, F_{S/I})$ is the same as
to give a geometric morphism $\phi:\sE \to \sE_{S/I}$ together with
a homomorphism $\varphi : L_{S/I} \to \phi_* F$ of \emph{internal distributive lattices}
in $\sE_{S/I}$. Furthermore, $\varphi$ factors through a homomorphism of
internal Boolean algebras into the internal sub-lattice of `complemented
elements' in $\phi_*F$. It thus seems somewhat spurious to consider
internal frames rather than simply internal Boolean algebras (or even
distributive lattices); this will be seen even more strongly when we characterise
states or probability valuations on $F_{S/I}$.
However, frames are a natural setting for the discussion of \emph{phase spaces}.
In any case, $L_{S/I}$ may be recovered from $F_{S/I}$ as, again, the
sub-lattice of `complemented elements'.

\section{States}

\subsection{Box world states}
Consider a box world presented by a map $S \to I$ of finite sets.
According to the interpretation as a series of `simultaneous' observations performed
by parties indexed by $I$, the most general form of an `empirical' probability distribution
is a map, assigning a certain probability distribution to the possible outcomes in each context.
The condition of \emph{non-signalling} requires that the distribution
in a `partial' context (where some parties do not perform a measurement)
arise consistently as a marginal of the distribution in a `total' context refining
the partial one.

This motivates the following definition. Recall first that for an object $S/I$
of $\BBox$, we consider the elements of $\sC_{S/I}$ as subsets of $S$. Denote
by $\sC^m_{S/I} \subset \sC_{S/I}$ the set of \emph{maximal} elements.
\begin{defn}
A \emph{box world state} on $S/I$ is a map
$$ p : \coprod_{c \in \sC^m_{S/I}} 2^c \to [0,1] $$
such that
\begin{enumerate}
\item (normalisation)
$\sum_{x \in 2^c} p(x) = 1$
for each $c\in\sC^m_{S/I}$,
\item (non-signalling) for each $c \in \sC_{S/I}$ and $x \in 2^c$
the value
$$
p(x) := \sum_{\substack{ y \in 2^d \\ y|_c = x}} p(y)
$$
does not depend on the choice of $d \in \sC^m_{S/I}$, $c \le d$.
\end{enumerate}
\end{defn}

Recall that embedding $\sC_{S/I} \to \Set$, composing with the free Boolean algebra
functor and the forgetful functor
gives the object $L_{S/I}$ in $\sE_{S/I} = \Set^\sC$.
We may view it as a $\sC_{S/I}$-shaped diagram in $\Set$, and take its colimit
$\colim L_{S/I}$ (this is \emph{not} a colimit in $\Bool$). Observe that for each
$c \in \sC_{S/I}$ there are natural maps
$$
2^c \xrightarrow{\delta_c} 2^{2^c} \simeq L_{S/I}(c) \xrightarrow{\iota_c} \colim L_{S/I}
$$
where $\delta_c$ takes $x \in 2^c$ to its indicator function $\delta_c(x):2^c \to 2$.
The following will soon be useful.
\begin{lem}\label{lem:states}
The map
$$
\coprod_{c\in\sC^m_{S/I}} 2^c \xrightarrow{\iota\circ\delta} \colim L_{S/I}
$$
induces a one-to-one correspondence between
box-world states on $S/I$ and maps
$$ \rho : \colim L_{S/I} \to [0,1] $$
such that $\rho\circ \iota_c$ is a probability valuation
on the Boolean algebra $L_{S/I}(c)$
for each $c \in \sC^m_{S/I}$.
\end{lem}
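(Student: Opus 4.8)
The plan is to translate everything through the universal property of the set-theoretic colimit $\colim L_{S/I}$ (which, as the text notes, is not a colimit of Boolean algebras). A map $\rho : \colim L_{S/I} \to [0,1]$ is the same datum as a cocone, i.e. a family of maps $\rho_c : L_{S/I}(c) = 2^{2^c} \to [0,1]$ indexed by \emph{all} contexts $c \in \sC_{S/I}$ (not only the maximal ones), subject to $\rho_{c'}\circ L_{S/I}(c\to c') = \rho_c$ whenever $c \le c'$. Since the transition map $L_{S/I}(c\to c')$ is the preimage map $A \mapsto \{y\in 2^{c'} : y|_c \in A\}$ along the restriction $2^{c'}\to 2^c$ of outcomes, this cocone identity reads $\rho_{c'}(\{y : y|_c \in A\}) = \rho_c(A)$. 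Under the map $\iota\circ\delta$, such a $\rho$ corresponds to the function $p$ with $p(x) = \rho_c(\delta_c(x)) = \rho_c(\{x\})$ for $x \in 2^c$ with $c$ maximal, and I will show this is the claimed bijection.

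First I would verify that if $\rho_c$ is a probability valuation for each maximal $c$, then $p$ is a box-world state. Normalisation is immediate from $\sum_{x\in 2^c} \rho_c(\{x\}) = \rho_c(2^c) = 1$. For non-signalling, fix any context $c$, an outcome $x \in 2^c$, and a maximal $d \ge c$; the atoms $\{y\}$ with $y|_c = x$ are precisely those contained in $\{y\in 2^d : y|_c = x\} = L_{S/I}(c\to d)(\{x\})$, so $\sum_{y|_c = x} p(y) = \rho_d(L_{S/I}(c\to d)(\{x\})) = \rho_c(\{x\})$ by the cocone identity. As the right-hand side makes no reference to $d$, the marginal is independent of the chosen maximal $d$, which is exactly non-signalling.

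Conversely, given a box-world state $p$, I would build the cocone. As $\sC_{S/I}$ is finite, every context $c$ lies below some maximal $d$, so the marginal $p_c(x) := \sum_{y\in 2^d,\, y|_c = x} p(y)$ is defined, and non-signalling is precisely the assertion that $p_c(x)$ is independent of $d$. Setting $\rho_c(A) := \sum_{x\in A} p_c(x)$ gives an additive map with $\rho_c(2^c) = 1$ by normalisation, hence a probability valuation; for maximal $c$ one has $p_c = p$, recovering the original values. The one genuine computation is the cocone identity for arbitrary $c \le c'$: choosing a maximal $d' \ge c'$ and collapsing the double sum $\sum_{x'|_c = x}\sum_{y|_{c'} = x'} p(y)$ to $\sum_{y\in 2^{d'},\, y|_c = x} p(y)$ identifies $\rho_{c'}(L_{S/I}(c\to c')(\{x\}))$ with the marginal $p_c(x) = \rho_c(\{x\})$, and additivity of $\rho_{c'}$ then promotes this from atoms to all subsets. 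This yields a cocone, hence a map $\rho : \colim L_{S/I}\to[0,1]$ of the required form.

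It remains to check that the two passages are mutually inverse. Starting from a state $p$ and returning, the reconstructed value on an atom over a maximal $c$ is $\rho_c(\{x\}) = p_c(x) = p(x)$, since the only maximal $d \ge c$ is $c$ itself. Starting from a cocone $\rho$ and returning requires comparing the reconstructed $\rho'$ with $\rho$ at \emph{every} context, not merely the maximal ones, and this is where the only real subtlety lies, since the hypothesis constrains $\rho$ a priori only at maximal contexts. The resolution is that the cocone identity already forces $\rho_c = \rho_d\circ L_{S/I}(c\to d)$ for any maximal $d\ge c$; because $L_{S/I}(c\to d)$ is a Boolean homomorphism (in particular it preserves disjoint joins) and $\rho_d$ is a probability valuation, $\rho_c$ is automatically additive, and indeed a probability valuation, at every $c$. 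Comparing $\rho'$ and $\rho$ on atoms via the computation of the previous paragraph and extending by this additivity gives $\rho' = \rho$, completing the bijection.
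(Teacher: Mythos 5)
Your proof is correct and follows essentially the same route as the paper's: both identify a map out of $\colim L_{S/I}$ with a compatible family of maps $\rho_c$ indexed by contexts, use the bijection between probability valuations on $2^{2^c}$ and distributions on $2^c$, recognise non-signalling as exactly the cocone compatibility condition, and verify the two constructions are mutually inverse. The main difference is that you spell out the steps the paper dismisses as ``immediate'' (notably the automatic additivity of $\rho_c$ at non-maximal contexts when checking $\rho'=\rho$, which the paper instead handles via uniqueness of the descended map), so your write-up is, if anything, more complete.
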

\begin{proof}
Note first that for each finite set $A$, the set of probability
valuations on the power-set Boolean algebra $2^A$ is in natural bijection
with the set of functions $q: A \to [0,1]$ such that $\sum_{a\in A} q(a) = 1$.

Suppose now $p$ is a box world state on $S/I$.
It follows that
$p$ induces a probability valuation $p_c$ on $L_{S/I}(c)$ for each $c \in \sC_{S/I}^m$.
Furthermore, by the non-signalling condition, $p$ may be consistently extended to $2^{c'}$ for
each non-maximal $c'$, and thus defines a probability valuation $p_{c'}$ on $L_{S/I}(c')$.
By construction, $p_{c'} = p_c \circ L_{S/I}(c'\to c)$ whenever $c' \le c$ so that
$\coprod_c p_c : \coprod_c L_{S/I}(c) \to [0,1]$ descends to a
\emph{unique} map $\rho : \colim L_{S/I} \to [0,1]$ such that  $\rho \circ \iota_c = p_c$
for all $c\in\sC_{S/I}$.
In fact, since for every $c' \in \sC_{S/I}$ there exists $c \in \sC^m_{S/I}$ with
$c'\le c$, it follows that $\rho$ is the \emph{unique} $[0,1]$-valued map
from $\colim L_{S/I}$ such that $\rho\circ\iota_c = p_c$ for all $c\in\sC_{S/I}^m$.

Conversely, it is immediate that given a map $\rho : \colim_{S/I} \to [0,1]$
such that $\rho\circ\iota_c$ is a probablity valuation for all $c\in\sC^m_{S/I}$,
and pulling back along $\iota\circ\delta$, one obtains a box-world state
$p : \coprod_{c\in\sC^m_{S/I}} 2^c \to [0,1]$. Applying the construction of the
previous paragraph to $p$ gives $\rho$, whence the one-to-one correspondence.
\end{proof}

\subsection{Internal states}
We shall now recall the definition of probability valuations
on internal distributive lattices and frames in a topos $\sE$. We
require that $\sE$ possess a natural numbers object $\NN_\sE$, so
that we may construct the lower reals object $\RR_{\ell,\sE}$.
Recall that the latter is an additive monoid and a partially ordered object.
We denote by $[0,1]_{\ell,\sE}$ the sub-object of $\RR_{\ell,\sE}$ such that
a morphism $r:T \to \RR_{\ell,\sE}$ factors through $[0,1]_{\ell,\sE}$
if and only if $0_T \le r \le 1_T$ (where $0_T, 1_T : T \to \RR_{\ell,\sE}$
are the obvious `constant' morphisms).
Now, $[0,1]_{\ell,\sE}$ is an internal dcpo, i.e. an $\IIdl_\sE$-algebra.
\begin{defn}
Let $L$ an internal distributive lattice in $\sE$. An \emph{probability valuation} on $L$
\emph{as a distributive lattice}
is an isotone morphism $P: L \to [0,1]_{\ell,\sE}$ satisfying the conditions of:
\begin{enumerate}
\item (normalisation) $P 0_L = 0_{[0,1]}$ and $P 1_L = 1_{[0,1]}$,
\item (modularity) the diagram
$$\begin{CD}
L \times L @>{P\times P}>> [0,1]^2_{\ell,\sE} @>{+}>> \RR_{\ell,\sE} \\
@V{\langle \vee, \wedge \rangle}VV @. @| \\
L \times L @>>{P\times P}> [0,1]^2_{\ell,\sE} @>>{+}> \RR_{\ell,\sE}
\end{CD}$$
commutes.
\end{enumerate}
If $L$ is an internal frame, $P$ is
a probability valuation on $L$ \emph{as a frame} if additionally it satisfies
\begin{enumerate}
\item[(3)] (continuity) the diagram
$$\begin{CD}
\Idl L @>{\Idl P}>> \Idl [0,1]_{\ell,\sE} \\
@VVV @VVV \\
L @>>{P}> [0,1]_{\ell,\sE}
\end{CD}$$
commutes, i.e. $P$ is a morphism of internal dcpo.
\end{enumerate}
\end{defn}

Note that $P$ is a morphism in $\Ord_\sE$,
so that $\Idl P$ is a well-defined.
The vertical arrows in the last diagram are the structure morphisms
for $\IIdl_\sE$-algebras (recall that a frame is in particular a dcpo).

\begin{lem}\label{lem:val-frm-lat}
Let $L$ be an internal distributive lattice in $\sE$. Then the natural morphism
$\eta_L : L \to \Idl' L$ induces a one-to-one correspondence between
probability valuations on $L$ (as a distributive lattice) and
probability valuations on $\Idl' L$ (as a frame).
\end{lem}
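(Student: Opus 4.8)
The plan is to realise the correspondence as restriction along $\eta_L$ in one direction and a free extension in the other, deducing the continuity condition (3) for free from the monad structure and isolating modularity as the only substantial point. First I would record the two maps. In the forward direction, given a probability valuation $Q$ on $\Idl' L$ as a frame, I set $P=Q\circ\eta_L$; since $\eta_L$ is a homomorphism of internal distributive lattices (as recorded earlier, $\eta_X$ preserves $0,1,\vee,\wedge$), precomposition carries the normalisation and modularity of $Q$ to the corresponding conditions for $P$, so $P$ is a probability valuation on $L$ as a distributive lattice. In the backward direction I use that the object underlying $\Idl' L$ is the free internal dcpo $\Idl L$ on the poset underlying $L$, while $[0,1]_{\ell,\sE}$ is itself an internal dcpo, i.e. an $\IIdl_\sE$-algebra with structure map $\xi$. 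The free--forgetful adjunction for $\IIdl_\sE$ then sends an isotone $P:L\to[0,1]_{\ell,\sE}$ to a unique internal dcpo morphism $\widehat P=\xi\circ\Idl P:\Idl' L\to[0,1]_{\ell,\sE}$ with $\widehat P\circ\eta_L=P$.

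Continuity is now automatic and already yields the bijection at the level of underlying maps: a frame valuation is by definition a morphism of internal dcpo, so any $Q$ with $Q\circ\eta_L=P$ must coincide with the unique extension $\widehat P$, while conversely $\widehat P\circ\eta_L=P$ by the unit triangle. It remains to check that $\widehat P$ is a genuine frame valuation whenever $P$ is a lattice valuation. Normalisation is immediate, since the bottom and top of $\Idl' L$ are the principal ideals $\dn 0_L=\eta_L(0_L)$ and $\dn 1_L=\eta_L(1_L)$, whence $\widehat P(0)=P(0_L)=0$ and $\widehat P(1)=P(1_L)=1$.

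The crux is modularity of $\widehat P$, i.e. equality of the two internal morphisms $\Idl' L\times\Idl' L\to\RR_{\ell,\sE}$ given by $+\circ(\widehat P\times\widehat P)$ and $+\circ(\widehat P\times\widehat P)\circ\langle\vee,\wedge\rangle$. Both are internal dcpo morphisms: $\widehat P$ is Scott-continuous by construction, addition on the lower reals is Scott-continuous, and the frame operations $\vee,\wedge$ on $\Idl' L$ are jointly Scott-continuous (join preserves all joins, while binary meet preserves directed joins by the frame distributive law). Moreover the two morphisms agree after precomposition with $\eta_L\times\eta_L$: on principal ideals one has $\eta_L(x)\vee\eta_L(y)=\eta_L(x\vee y)$ and $\eta_L(x)\wedge\eta_L(y)=\eta_L(x\wedge y)$, so the identity reduces to $P(x)+P(y)=P(x\vee y)+P(x\wedge y)$, which is precisely modularity of $P$.

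To finish I must upgrade ``agreement on principal ideals'' to ``agreement everywhere'' for a pair of internal dcpo morphisms out of the product, and this is the step I expect to be the main obstacle, since no pointwise argument is available inside $\sE$. My plan is to sidestep a two-variable density argument by first proving that the ideal-completion monad preserves finite products, i.e. that the canonical comparison $\Idl(L\times L)\to\Idl L\times\Idl L$ is an isomorphism in $\Ord_\sE$. Externally this rests on the elementary observation that a directed lower set $I\subseteq P\times Q$ equals the product of its two projections: given $(a,b)$ with $a,b$ in the respective projections, directedness produces an element of $I$ above $(a,b)$, so $(a,b)\in I$ as $I$ is lower. I would recast this argument in the internal language of $\sE$. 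Granting it, $\Idl' L\times\Idl' L$ is the free internal dcpo on $L\times L$ with unit $\eta_L\times\eta_L$, and the uniqueness clause of that single universal property forces the two dcpo morphisms above to coincide. This yields modularity of $\widehat P$, completes the verification that $\widehat P$ is a frame valuation, and establishes the claimed one-to-one correspondence.
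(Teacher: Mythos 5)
Your proposal is correct, and its core mechanism is the same as the paper's: the free--forgetful adjunction for the ideal-completion monad, under which continuity (condition (3)) is exactly membership in $\Hom_{\Dcpo_\sE}(\Idl L,[0,1]_{\ell,\sE})$, and the adjunction bijection $\Hom_{\Dcpo_\sE}(\Idl L,[0,1]_{\ell,\sE})\simeq\Hom_{\Ord_\sE}(L,[0,1]_{\ell,\sE})$ identifies the underlying isotone morphisms. The difference lies in what happens afterwards: the paper's proof stops there, asserting without argument that conditions (1),(2),(3) on the frame side correspond to (1),(2) on the lattice side, whereas you isolate the one genuinely non-trivial implication --- that modularity of $P$ forces modularity of the extension $\widehat P$ --- and prove it. Your key lemma, that the canonical comparison $\Idl(L\times L)\to\Idl L\times\Idl L$ is an isomorphism compatible with the units, so that $\eta_L\times\eta_L$ exhibits $\Idl L\times\Idl L$ as the free internal dcpo on $L\times L$ and two Scott-continuous maps agreeing on principal ideals agree everywhere, is the right tool and appears nowhere in the paper; the external argument you sketch for it is intuitionistically valid (it uses only lower-closure, directedness and inhabitedness of ideals), so internalising it is unproblematic. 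Two details should be patched in a final write-up. First, the uniqueness argument via freeness needs the common codomain of your two composites to be an internal dcpo, and the paper's $\RR_{\ell,\sE}$ (\emph{bounded} lower reals) is not closed under directed joins; but both composites factor through the sub-object of lower reals bounded by $2$, which is an $\IIdl_\sE$-algebra, so the argument goes through there. Second, your product lemma genuinely requires ideals to be inhabited --- with possibly-empty ideals the comparison map fails to be surjective --- which is the convention implicitly forced by the paper's claims that $\eta_L$ is a homomorphism of bounded lattices and that $\IIdl$-algebras are dcpos.
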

\begin{proof}
By adjunction, $\eta_L$ induces
a bijection
$$\Hom_{\Dcpo_{\sE}}(\Idl L, [0,1]_{\ell,\sE}) \simeq \Hom_{\Ord_{\sE}}(L, [0,1]_{\ell,\sE})$$
where we recall that $\Idl' L$ as an internal dcpo is the same as $\Idl L$. Hence an isotone
morphism $\Idl' L \to [0,1]_{\ell,\sE}$ satisfying conditions (1), (2), (3) of the Definition
is the same as an isotone morphism $L \to [0,1]_{\ell,\sE}$ satisfying conditions (1), (2).
\end{proof}

\begin{lem}\label{lem:val-bool}
Let $L$ be an internal Boolean algebra in $\sE$. Then every probability
valuation $P : L \to [0,1]_{\ell,\sE} \subset \RR_{\ell,\sE}$ factors through
the Dedekind reals $\RR_{d,\sE}$.
\end{lem}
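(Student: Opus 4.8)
The plan is to exploit the complementation on $L$ together with modularity and normalisation, reducing the statement to a self-contained claim about lower reals. All reasoning is carried out in the internal language of $\sE$, where the order on $\QQ_\sE$ is decidable, so the intuitionistic arguments below interpret in any topos with a natural numbers object. Writing $\neg_L$ for the internal complementation and substituting $y = \neg_L x$ into the modularity square, the identities $x \vee \neg_L x = 1_L$ and $x \wedge \neg_L x = 0_L$ collapse it to $P(x) + P(\neg_L x) = P(1_L) + P(0_L) = 1$, an equality of morphisms $L \to \RR_{\ell,\sE}$. Thus each value $P(x)$ is accompanied by a partner lower real $P(\neg_L x)$ whose sum with it is the constant $1$.

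The heart of the matter is the following claim: a lower real $a$ that admits a lower real $a'$ with $a + a' = 1$ is automatically a Dedekind real. To prove it I would name the lower cuts $L_a, L_{a'} \subset T \times \QQ_\sE$ and define a candidate upper cut $U \subset T\times\QQ_\sE$ by declaring $q \in U$ iff $1 - q \in L_{a'}$. That $U$ is a rounded upper set is inherited directly from $L_{a'}$ being a rounded lower set. Disjointness of $L_a$ and $U$ is a one-line computation: were $q \in L_a$ and $1 - q \in L_{a'}$, then by the definition of the sum of lower reals $1 = q + (1 - q)$ would lie in the lower cut of $a + a' = 1$, contradicting $1 \not< 1$.

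The step I expect to be the main obstacle is locatedness, since this is the place where one must reason constructively rather than by excluded middle. Given rationals $q < q'$, the rational $q + (1 - q') = 1 - (q' - q)$ is strictly below $1$ and hence lies in the lower cut of $a + a'$; unwinding the definition of the sum and using roundedness yields $r \in L_a$ and $r' \in L_{a'}$ with $r + r' > q + (1 - q')$. Since the order on $\QQ_\sE$ is decidable, the disjunction $r > q \,\vee\, r' > 1 - q'$ holds, and as $L_a$ and $L_{a'}$ are lower sets this forces $q \in L_a$ or $1 - q' \in L_{a'}$, i.e. $q \in L_a \vee q' \in U$. Together with roundedness and disjointness this exhibits $(L_a, U)$ as a Dedekind cut, so $a$ factors through $\RR_{d,\sE}$. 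Finally, applying the claim to the generic element, i.e. with $T = L$, $x = \id_L$, $a = P$ and $a' = P \circ \neg_L$, produces an upper cut $U \subset L \times \QQ_\sE$ and hence the desired lift $L \to \RR_{d,\sE}$ of $P$.
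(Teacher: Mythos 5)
Your proof is correct, and its overall decomposition matches the paper's: both specialise modularity and normalisation to the pair $(x,\neg_L x)$, obtaining $P+P\circ\neg_L=1$ as morphisms $L\to\RR_{\ell,\sE}$, and both then rest on the principle that a lower real admitting a lower-real partner of constant sum is already Dedekind. Where you differ is in the treatment of that principle. The paper asserts it without proof, in the guise of a pullback square exhibiting $\RR_{d,\sE}$ as the fibre over $0$ of the addition $\RR_{\ell,\sE}^2\to\RR_{\ell,\sE}$ (and correspondingly uses the pair $\langle P,\,P\circ\neg_L-1\rangle$ of sum $0$ rather than your pair of sum $1$); you instead prove it in the internal language: the construction of $U$ by $q\in U\Leftrightarrow 1-q\in L_{a'}$, the one-line disjointness argument, and the locatedness argument via roundedness and decidability of the order on $\QQ_\sE$ (which is indeed decidable in any topos with a natural numbers object) are all sound, so your write-up supplies exactly the step the paper leaves unjustified. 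One caveat deserves mention: what you produce is the standard two-sided Dedekind cut (rounded, disjoint, located), while the paper's stated definition of $\RR_{d,\sE}$ asks for a lower real together with a \emph{complementary sub-object}; your pair $(L_a,U)$ is not complementary in that sense (already in $\Set$, when the value is rational, that rational lies in neither cut). The two versions of the Dedekind reals object genuinely differ in a general topos, but they coincide -- both being $f^*\RR$ -- in the Kripke topoi where the lemma is actually applied (Lemma \ref{lem:val-bool-2} and Theorem \ref{thm:states}); and since the paper's pullback square is itself only valid for the standard definition, your reading is the one under which either argument literally goes through.
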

\begin{proof}
Let $\nu : \RR_{d,\sE} \to \RR_{\ell,\sE}^2$ be the morphism taking a Dedekind cut
to its lower and negative upper part. We then have a pullback diagram
$$\begin{diagram}
\node{\RR_{d,sE}}\arrow{s}\arrow{e,t}{\nu}\node{\RR_{\ell,\sE}^2}\arrow{s,r}{+} \\
\node{1_\sE}\arrow{e,t}{0_\RR} \node{\RR_{\ell,\sE}.}
\end{diagram}$$
Now, consider the diagram
$$ L \xrightarrow{\langle \id,\neg\rangle}
L \times L \overset{\id}{\underset{\wedge \times \vee}{\rightrightarrows}}
L \times L \xrightarrow{P\times (P-1)}
\RR_\ell^2 \xrightarrow{+} \RR_{\ell,\sE} $$
where the two composites coincide by modularity of $P$, and are in fact
equal to the composite $$L \to 1_\sE \xrightarrow{0_\RR} \RR_{\ell,\sE}.$$
It follows that the top composite $L \to \RR^2_{\ell,\sE}$ factors through
$\nu$. Composing with the first projection $\RR^2_{\ell,\sE} \to \RR_{\ell,\sE}$,
we have that $P : L \to \RR_\ell$ factors through
the `lower part' map $\RR_{d,\sE} \to \RR_{\ell,\sE}$.
\end{proof}

\subsection{Main result}
\begin{lem}\label{lem:val-bool-2}
Let $L$ be an internal Boolean algebra in a Kripke topos $\sE = \Set^\sC$,
and $f : \sE \to \Set$ an essential
geometric morphism. Then for each probability valuation $P : L \to [0,1]_{\ell,\sE}$
there is a map $\bar P : f_! L \to [0,1]$ such that the diagram
$$\begin{CD}
L  @>{P}>> [0,1]_{\ell,\sE} \\
@VVV @AAA \\
f^*f_! L @>>{f^*\bar P}> f^*[0,1]
\end{CD}$$
commutes.
\end{lem}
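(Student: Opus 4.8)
The plan is to reduce the statement to a routine adjunction computation, the genuine content being supplied by Lemma~\ref{lem:val-bool}. First I would invoke that lemma: since $L$ is an internal Boolean algebra, $P : L \to [0,1]_{\ell,\sE}$ factors through the inclusion of the Dedekind reals $\RR_{d,\sE} \hookrightarrow \RR_{\ell,\sE}$ given by the `lower part' map. Because $P$ already lands in the sub-object $[0,1]_{\ell,\sE}$, this factorisation in fact passes through the Dedekind interval $[0,1]_{d,\sE}$.

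The key observation is then the description of the reals in a Kripke topos recalled earlier: the Dedekind reals are the constant functor, i.e. $\RR_{d,\sE} \simeq f^*\RR$. Since $f^*$ preserves finite limits and monomorphisms, it preserves the sub-object cut out by $0 \le r \le 1$, so likewise $[0,1]_{d,\sE} \simeq f^*[0,1]$. Hence $P$ factors as $L \xrightarrow{P'} f^*[0,1] \hookrightarrow [0,1]_{\ell,\sE}$, where the second arrow is exactly the right-hand vertical map of the square (the restriction to $[0,1]$ of the `lower part' map $\RR_{d,\sE} \to \RR_{\ell,\sE}$). It remains to transpose $P'$ across the adjunction $f_! \dashv f^*$, which exists precisely because $f$ is essential. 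Under the natural bijection $\Hom_{\Set}(f_!L, [0,1]) \simeq \Hom_{\Set^\sC}(L, f^*[0,1])$ the morphism $P'$ corresponds to a map $\bar P : f_!L \to [0,1]$ in $\Set$, with $P'$ equal to the composite $L \to f^*f_!L \xrightarrow{f^*\bar P} f^*[0,1]$ whose first arrow is the unit of the adjunction, i.e. the left-hand vertical map of the square. Composing with the inclusion $f^*[0,1] \hookrightarrow [0,1]_{\ell,\sE}$ recovers $P$, which is exactly the asserted commutativity.

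I do not expect a serious obstacle, as the whole argument is packaged into Lemma~\ref{lem:val-bool} together with the identification of the Dedekind reals with the constant functor $f^*\RR$. The only point demanding care is to confirm that the right vertical arrow in the diagram genuinely is the restriction to the Dedekind interval of the `lower part' map, so that the factorisation coming from Lemma~\ref{lem:val-bool} matches the map appearing in the square; once this is checked, the commutativity is a formal consequence of the description of the adjunction transpose in terms of its unit.
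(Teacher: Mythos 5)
Your proposal is correct and follows essentially the same route as the paper's own proof: invoke Lemma~\ref{lem:val-bool} to factor $P$ through the Dedekind reals, use the identification $\RR_{d,\sE}\simeq f^*\RR$ (hence $[0,1]_{d,\sE}\simeq f^*[0,1]$) in a Kripke topos, and transpose across the adjunction $f_!\dashv f^*$ so that commutativity of the square is the unit equation. The only difference is cosmetic ordering --- you restrict to the Dedekind interval before transposing, whereas the paper transposes to get $\bar P : f_!L\to\RR$ and then checks that $\bar P$ factors through $[0,1]$.
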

\begin{proof}
By Lemma \ref{lem:val-bool}, $P$ factors through
$\RR_{d,\sE}$. Since $\sE$ is a Kripke topos, $\RR_{d,\sE}$
is isomorphic to $f^*\RR$. By adjunction $f_! \vdash f^*$ we have
$\bar P : f_! L \to \RR$ such that $f^* \bar P$ composed with the unit $L \to f^*f_!L$
gives $P$. Finally, since $P$ factors through $[0,1]_{\ell,\sE}$
and thus through $[0,1]_{d,\sE} \simeq f^* [0,1]$, it follows that $\bar P$
factors through $[0,1]$.
\end{proof}

\begin{thm}\label{thm:states}
Let $S/I$ be an object of $\BBox$. The are natural one-to-one correspondences between:
\begin{enumerate}
\item probability valuations on $F_{S/I}$ as an internal frame in $\sE_{S/I}$,
\item probability valuations on $L_{S/I}$ as an internal distributive lattice in $\sE_{S/I}$,
\item box world states on $S/I$.
\end{enumerate}
\end{thm}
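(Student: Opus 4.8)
The plan is to establish the two bijections $(1)\Leftrightarrow(2)$ and $(2)\Leftrightarrow(3)$ separately, the first being essentially immediate and the second carrying the real content. For $(1)\Leftrightarrow(2)$: since $F_{S/I}=\Idl' L_{S/I}$ and $L_{S/I}$ is an internal distributive lattice in $\sE_{S/I}$, Lemma \ref{lem:val-frm-lat} already supplies a canonical bijection, induced by the unit $\eta_{L_{S/I}}:L_{S/I}\to\Idl' L_{S/I}$, between probability valuations on $F_{S/I}$ as a frame and on $L_{S/I}$ as a distributive lattice. Thus the whole substance of the theorem is the identification of $(2)$ with box world states.

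For $(2)\Leftrightarrow(3)$ I would first fix the essential geometric morphism $f:\sE_{S/I}\to\Set$ induced by $\sC_{S/I}\to\ast$, for which $f_! L_{S/I}=\colim L_{S/I}$. Since $L_{S/I}$ is in fact an internal Boolean algebra, Lemma \ref{lem:val-bool-2} applies and attaches to each internal probability valuation $P:L_{S/I}\to[0,1]_{\ell,\sE}$ an external map $\bar P:\colim L_{S/I}\to[0,1]$ fitting into its commuting square. The crucial input is Lemma \ref{lem:val-bool}: because $P$ factors through the Dedekind reals $\RR_{d,\sE}\simeq f^*\RR$, its values are genuine context-independent constants, which is exactly what lets $\bar P$ descend to the colimit. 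The forward direction $(2)\to(3)$ then amounts to reading this square pointwise. Evaluating at a context $c$, the unit $L_{S/I}(c)\to(f^*f_! L_{S/I})(c)$ is precisely the coprojection $\iota_c$, so the component $P_c$ is $\bar P\circ\iota_c$ followed by the inclusion of constants; the internal normalisation and modularity axioms, read off componentwise, assert exactly that $\bar P\circ\iota_c$ is an ordinary probability valuation on the finite Boolean algebra $L_{S/I}(c)$ for every $c$, in particular for maximal $c$. Hence $\bar P$ is a map of the kind classified by Lemma \ref{lem:states} and determines a box world state.

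For the converse $(3)\to(2)$ I would start from a box world state, invoke Lemma \ref{lem:states} to obtain $\rho:\colim L_{S/I}\to[0,1]$, and observe as in that lemma's proof that $\rho\circ\iota_c$ is a valuation on $L_{S/I}(c)$ for \emph{every} $c$: for non-maximal $c$ one has $\rho\circ\iota_c=(\rho\circ\iota_{c'})\circ L_{S/I}(c\to c')$ for any maximal $c'\ge c$, and the pullback of a valuation along a lattice homomorphism is again a valuation. I then define the internal $P$ componentwise, sending $L_{S/I}(c)$ into the constant lower reals via $\rho\circ\iota_c$. Naturality of $P$ reduces to the colimit identity $\rho\circ\iota_{c'}\circ L_{S/I}(c\to c')=\rho\circ\iota_c$ (the transition maps on the constant part of $[0,1]_{\ell,\sE}$ being identities), while isotonicity and the internal normalisation and modularity axioms follow componentwise from each $\rho\circ\iota_c$ being a valuation. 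By construction $\bar P=\rho$, so the two assignments are mutually inverse; naturality of all three correspondences is inherited from the canonicity of $\eta$, of the adjunction $f_!\dashv f^*$, and of the colimit coprojections.

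The hard part, and the step I would treat most carefully, is precisely the pointwise translation in the two middle paragraphs: checking that the topos-internal normalisation, modularity and continuity conditions on $P$ unwind to exactly the external valuation conditions on each $\bar P\circ\iota_c$, and that this rests on the constancy furnished by Lemma \ref{lem:val-bool}. Once the values are known to be constant reals, naturality of $P$ becomes the mere compatibility already encoded in the colimit, and the remaining verifications are routine.
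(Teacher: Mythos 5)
Your proposal is correct and follows essentially the same route as the paper: the (1)$\Leftrightarrow$(2) bijection via Lemma \ref{lem:val-frm-lat}, and the (2)$\Leftrightarrow$(3) bijection obtained by combining Lemma \ref{lem:val-bool-2} (resting on the Dedekind-reals factorisation of Lemma \ref{lem:val-bool}) with the identification $f_!L_{S/I}\simeq\colim L_{S/I}$, the componentwise reading of the valuation axioms, the reduction to maximal contexts, and finally Lemma \ref{lem:states}. The only difference is one of exposition: you unwind both directions of the correspondence explicitly where the paper simply chains the bijections supplied by the lemmas.
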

\begin{proof}
The bijection between (1) and (2) follows from Lemma \ref{lem:val-frm-lat}.
Since $L_{S/I}$ is Boolean, probability valuations on $L_{S/I}$ as an internal distributive
lattice are in bijection with maps $\bar P: f_! L_{S/I} \to [0,1]$ such that
composing $f^*\bar P$ with $L \to f^*f_!L$ is a probability valuation (Lemma \ref{lem:val-bool-2}).
Since
$f_! L_{S/I} \simeq \colim L_{S/I}$, we then have a further bijection with maps
$\rho : \colim L_{S/I} \to [0,1]$ such that the lift $$\tilde \rho : \coprod_{c\in\sC_{S/I}}
L_{S/I} \to [0,1] $$ defines a probability valuation $L_{S/I} \to f^*[0,1]$.
The latter condition is equivalent to requiring that the restriction
of $\tilde\rho$ to $L_{S/I}(c)$ be a probability valuation for each $c\in\sC_{S/I}$.
In fact, since for each $c'\in\sC_{S/I}$ there is
a $c \in \sC^m_{S/I}$ with $c' \le c$, the above requirement need only be stated
for maximal contexts. Hence, by Lemma \ref{lem:states}, we have a bijection between (2) and (3).
\end{proof}

\section{Phase space}

\subsection{Frames and locales}
Internal locales in $\sE$
are just internal frames, with the direction of morphism reversed:
$\Loc_\sE = \Frm_\sE^\op$. Given a geometric morphism $f:\sE \to \sE'$,
the adjunction $f^\sharp \vdash f_*$ may be viewed as a pair of functors
between $\Loc_\sE$ and $\Loc_{\sE'}$. We use $\OO$ to denote the tautological contravariant
functor from frames to locales. Now, given a locale $X$ in $\Set$, the frame $\OO(X)$
viewed as a category comes with a Grothendieck topology where a sieve
$(u_i \to u)_{i\in I}$ is covering whenever $u = \bigvee_{i\in I} u_i$.
Thus $\OO(X)$ becomes a site, and its topos of sheaves is denoted $\Sh(X)$.
The functor $\Sh$ from locales to topoi is full and faithful, and its essential image
is the category of \emph{localic} topoi. These are in particular bounded,
with a geometric morphism $\Sh(X) \to \Set$ induced by
the unique frame homomorphism $2 \to \OO(X)$.
We observe that if $M$ is a topological space, its topology is a frame and
thus we may consider the corresponding locale: this gives a functor from
topological spaces to locales, admitting a right adjoint.
Stone duality
restricts it to an equivalence between the category of \emph{sober topological spaces}
with the category of \emph{spatial locales} in $\Loc$.
In any case, the usual topos of sheaves
on a topological space $M$ is canonically equivalent to the topos
of sheaves on the corresponding locale.

We let $\lFrmTop \subset \bFrmTop$ denote the full subcategory of localic framed topoi.
In particular, for $S/I$ in $\BBox$ we have $\sE_{S/I} \simeq \Sh(\sC_{S/I})$
where we view $\sC_{S/I}$ as a sober topological space using the
Alexandrov topology. Thus the functor $\BBox^\op \to \FrmTop$
factors through $\lFrmTop$.

Note that for each $(\Sh(X),F)$ in $\lFrmTop$ the internal frame $F$ may be viewed
as an internal locale.  The direct image functor $\Loc_{\Sh(X)} \to \Loc$
sends the final internal locale, corresponding to the initial frame $\Omega_{\Sh(X)}$,
to $X$ itself, and thus factors through the slice category $\Loc/X$.
A fundamental result in topos theory~\cite[Thm. C1.6.3]{johnstone} is that
this gives an equivalence of categories
$\Loc_{\Sh(X)} \simeq \Loc/X.$
Following the definitions, we thus obtain
an equivalence
$$
\lFrmTop \simeq \Loc^\to
$$
between localic framed topoi and the arrow category of locales in $\Set$.
Composing with the functor from box world presentations gives the
\emph{external phase space} functor
$$
\XX : \BBox^\op \to \Loc^\to.
$$
It turns out that the latter factors through the
arrow sub-category of \emph{spatial} locales,
and thus through the category of homeomorphisms
of sober topological spaces.
We will give its explicit description.

\subsection{The external phase space}
Consider a box world presented by $S \to I$. Recall once again that
$\sC_{S/I}$ is a subset of the power-set $2^S$, ordered by inclusion.
The functor $L_{S/I}$ takes a context $c \subset S$ to the free
Boolean algebra on $c$, and may be viewed as a functor into the category
$\fBool$ of finite Boolean algebras. Now, we have an adjunction
$$2^- : \Set \rightleftarrows \fBool^\op : \Sp $$
where the \emph{spectrum functor} is right-adjoint to
the powerset functor. Thus, the composite
$$ \Sp \circ L_{S/I} : \sC^\op_{S/I} \to \Set $$
is a \emph{contravariant} set-valued functor, i.e. a \emph{presheaf} on $\sC_{S/I}$.
We construct the \emph{category of elements}
$$ X_{S/I} = \int \Sp\circ L_{S/I}, $$
fibred over $\sC_{S/I}$. Its set of objects is the disjoint union
$\coprod_{c\in \sC_{S/I}}  \Sp L_{S/I}(c)$, while the hom-set
$\Hom_{X_{S/I}}(x,x')$ is a singleton if
$x$ is the image of $x'$ under $\Sp L_{S/I}(c' \to c)$, or empty otherwise.
Thus, $X_{S/I}$ is in fact a poset, with an isotone projection to $\sC_{S/I}$.
We may give a very straightforward description as follows: just as $\sC_{S/I}$
is the poset of \emph{contexts}, $X_{S/I}$ is the poset of \emph{outcomes},
ordered by refinement. That is:
$$ X_{S/I} \simeq \{ (c,x)\ |\ c \in \sC_{S/I},\ x:c \to 2 \} $$
where $(c',x') \le (c,x)$ if and only if $c' \le c$ and $x|_{c'}=x'$.

\begin{lem}\label{lem:ps-explicit}
The external phase space of $S/I$ is isomorphic, as an object of $\Loc^\to$, to the
continuous map of Alexandrov spaces $X_{S/I} \to \sC_{S/I}$.
\end{lem}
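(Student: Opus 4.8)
The plan is to unwind the construction of the external phase space functor $\XX$ and carry out the identification one stalk at a time over the base poset $\sC_{S/I}$. By definition $\XX(S/I)$ is obtained by forming the localic framed topos $(\sE_{S/I}, F_{S/I}) = (\Sh(\sC_{S/I}), \Idl L_{S/I})$ and transporting it along the equivalence $\lFrmTop \simeq \Loc^\to$, which is itself assembled from Johnstone's equivalence $\Loc_{\Sh(X)} \simeq \Loc/X$. Under this recipe the base of the resulting arrow is exactly $\sC_{S/I}$, regarded as a finite (hence sober) Alexandrov space, so the entire problem reduces to identifying the internal frame $F_{S/I}$ with the internal frame of the projection $\pi : X_{S/I} \to \sC_{S/I}$.

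First I would make explicit how the equivalence $\Loc_{\Sh(\sC_{S/I})} \simeq \Loc/\sC_{S/I}$ acts on spatial data over an Alexandrov base: to a continuous map $p : Y \to \sC_{S/I}$ it assigns the internal frame whose stalk at $c$ is the frame $\OO(p^{-1}(\up c))$ of opens of the preimage of the minimal open neighbourhood $\up c$, with restriction maps given by intersection. As a sanity check this sends the identity $\sC_{S/I} \to \sC_{S/I}$ to the sheaf $c \mapsto \OO(\up c)$, i.e.\ to the subobject classifier $\Omega_{\sE_{S/I}}$, which is precisely the initial frame and final internal locale, exactly as recorded just before the statement.

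The heart of the argument is then a direct comparison of both sides. The explicit form of the internal ideal completion in a Kripke topos presents $(\Idl L_{S/I})(c)$ as the poset of sub-functors $J \subset L_{S/I}|_{\up c}$ that are pointwise order ideals; since each $L_{S/I}(c') = 2^{2^{c'}}$ is finite, $J(c') = \dn x_{c'}$ is principal, determined by an outcome-set $x_{c'} \subseteq 2^{c'}$. Using $L_{S/I}(c'\to c'')(x) = \{\, y \in 2^{c''} : y|_{c'} \in x \,\}$, the sub-functor condition $L_{S/I}(c'\to c'')(x_{c'}) \le x_{c''}$ unwinds to the assertion that $\coprod_{c'\ge c}\{\,(c',y) : y \in x_{c'}\,\}$ is an upper subset of $\pi^{-1}(\up c) = \{\,(c',y) : c' \ge c\,\}$. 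Since upper subsets are exactly the opens of the Alexandrov space $\pi^{-1}(\up c)$, this produces a bijection $(\Idl L_{S/I})(c) \cong \OO(\pi^{-1}(\up c))$, natural in $c$ and compatible with the frame operations (joins as unions, meets as intersections). Hence $F_{S/I}$ is isomorphic, as an internal frame, to the stalkwise frame $c \mapsto \OO(\pi^{-1}(\up c))$ produced by the recipe above.

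Combining the two computations, $F_{S/I}$ is the internal frame of the projection $\pi : X_{S/I} \to \sC_{S/I}$; by full faithfulness of $\Sh$ and of the equivalence $\Loc_{\Sh(\sC_{S/I})} \simeq \Loc/\sC_{S/I}$, the object $\XX(S/I)$ of $\Loc^\to$ is therefore isomorphic to the arrow $\pi$, with both $X_{S/I}$ and $\sC_{S/I}$ spatial since they are finite $T_0$ Alexandrov spaces. The main obstacle I anticipate is the bookkeeping of variances: keeping straight that Alexandrov opens are up-sets, that the refinement order on $X_{S/I}$ places finer outcomes above coarser ones, and that the free-Boolean-algebra transition maps $L_{S/I}(c'\to c'')$ act by preimage along the restriction $2^{c''}\to 2^{c'}$ --- all of these must line up for the ideal/upper-set dictionary to come out with the correct orientation. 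A secondary point needing care is the explicit description of the equivalence over an Alexandrov base, which I would establish by the stalk computation above rather than take for granted.
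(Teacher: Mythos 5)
Your proof is correct, and its combinatorial heart coincides with the paper's: finiteness of each $L_{S/I}(c')$ makes every ideal principal, and the subfunctor condition on the generators is precisely the condition of being an upper set, i.e.\ an Alexandrov open, with the correct variance. Where you genuinely diverge is in how this computation is hooked into the equivalence $\lFrmTop\simeq\Loc^\to$. The paper works in the forward direction: by construction the external phase space is the locale map corresponding to the frame homomorphism $f_*\Omega_{S/I}\to f_*F_{S/I}$, and since $\sC_{S/I}$ has a least element $\emptyset$, the direct image $f_*$ is just evaluation at $\emptyset$; so the paper runs the ideals-equal-upper-sets computation once, at the bottom context, obtaining $f_*\Omega_{S/I}\simeq\OO(\sC_{S/I})$ and $f_*F_{S/I}=(\Idl L_{S/I})(\emptyset)\simeq\OO(X_{S/I})$, with the connecting homomorphism being preimage along the projection. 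You instead run the computation at every $c$, identify the whole internal frame $F_{S/I}$ with the internal frame $c\mapsto\OO(\pi^{-1}(\up c))$ that the inverse of Johnstone's equivalence attaches to $\pi:X_{S/I}\to\sC_{S/I}$, and conclude by full faithfulness. Your route proves a little more (an isomorphism of internal frames, not only of their global-sections externalizations) and never uses the least context, so it would survive generalizations where the context poset lacks a bottom element (cf.\ \ref{ss:general}); its cost is the explicit stalkwise description of the backward functor of $\Loc_{\Sh(\sC_{S/I})}\simeq\Loc/\sC_{S/I}$, which you flag but only verify on the identity map. That description is standard --- it is essentially how the equivalence of \cite[Thm. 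C1.6.3]{johnstone} is constructed --- so relying on it is legitimate, but it should be cited or proved for Alexandrov bases (which is easy) rather than inferred from the single sanity check; as it stands this is the only soft spot in an otherwise complete argument.
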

\begin{proof}
By construction, the external phase space is the homeomorphism of locales corresponding
to the frame homomorphism $f_* \Omega_{S/I} \to f_* F_{S/I}$
where $f : \sE_{S/I} \to \Set$ is the unique geometric morphism,
and $\Omega_{S/I}$ is the sub-object classifier in $\sE_{S/I}$. Since
the poset $\sC_{S/I}$ admits a lower bound $\emptyset$ (the empty context),
we have that
$$ f_* F = \lim F = F(\emptyset) $$
for each functor $F : \sC \to \Set$. We already know that
$f_*\Omega_{S/I} = \Omega_{S/I}(\emptyset)$ is the frame $\OO(\sC_{S/I})$
for the Alexandrov topology: indeed, it is the lattice of
subfunctors of $1_{\sE_{S/I}}$, coinciding with the lattice of
upper sets in $\sC_{S/I}$. On the other hand,
$f_* F_{S/I} = (\Idl L_{S/I})(\emptyset)$ is the lattice of
subfunctors $J \subset L_{S/I}$ such that $J(c) \subset L_{S/I}(c)$ is an
ideal for all $c \in \sC_{S/I}$. Since $L_{S/I}(c)$ is finite, every ideal
is principal, and thus $f_* F_{S/I}$ is naturally identified with the
poset of `sections' $u : \sC_{S/I} \to \coprod_c L_{S/I}(c)$ such that
$u(c) \ge L_{S/I}(c' \to c) u(c')$ whenever $c' \le c$.
With this latter description, the homomorphism $\OO(\sC_{S/I}) \to f_* F_{S/I}$
sends an upper set $U \subset \sC_{S/I}$ to a section $u$ such that
$u(c) = 1$ if $c \in U$, or $u(c)=0$ otherwise. Now, note that
$L_{S/I}(c)$ is precisely the powerset of the fibre of $X_{S/I}$ over $c$,
i.e. $2^{\Sp L_{S/I}(c)}$. Thus, a section $u$ in $f_* F_{S/I}$
is the same as a subset $G \subset X_{S/I}$ such that the \emph{preimage}
of the fibre $G_{c'} \subset \Sp L_{S/I}(c')$ under
$\Sp L_{S/I}(c'\to c)$ is contained in $G_{c} \subset \Sp L_{S/I}(c)$.
Hence, $f_* F_{S/I}$ is canonically identified with the lattice of upper sets in $X_{S/I}$,
i.e. the frame $\OO(X_{S/I})$ for the Alexandrov topology. Furthermore,
the homomorphism $\OO(\sC_{S/I}) \to f_*F_{S/I} \simeq \OO(X_{S/I})$ now takes
an upper set $U \subset \sC_{S/I}$ to its preimage in $X_{S/I}$ under the
natural projection, whence the corresponding morphism of locales
-- i.e., of Alexandrov spaces -- is the projection again.
\end{proof}

Recall that given a morphism $S'/I' \to S/I$
in $\BBox$, with underlying map $\varphi : S' \to S$, the corresponding
map on contexts $\sC_{S/I} \to \sC_{S'/I'}$ is given by the pullback
map $\varphi^* : 2^S \to 2^{S'}$. Now, given an outcome $x : c \to 2$
with $c \subset S$, we obtain $\varphi^*x : \varphi^{*}c \to 2$, an outcome in
the context $\varphi^{*}c$. It thus follows that the map on contexts lifts
to a map on the external phase spaces $X_{S/I} \to X_{S'/I'}$:
$$\begin{CD}
X_{S/I} @>{\varphi^*}>> X_{S'/I'} \\
@VVV @VVV \\
\sC_{S/I} @>>{\varphi^{*}}> \sC_{S'/I'}.
\end{CD}$$
All arrows in the above diagram are isotone, and thus
may be viewed as continuous maps between Alexandrov spaces.
Identifying the latter with spatial locales, we obtain a functor
$$ \tilde\XX : \BBox^\op \to \Loc^\to. $$
\begin{lem}
$\tilde\XX$ is naturally isomorphic to $\XX$.
\end{lem}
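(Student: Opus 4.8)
The plan is to leverage Lemma~\ref{lem:ps-explicit}, which already supplies, for each object $S/I$, an isomorphism $\XX(S/I)\simeq\tilde\XX(S/I)$ in $\Loc^\to$, both being the Alexandrov projection $X_{S/I}\to\sC_{S/I}$. What remains is to verify that this family of isomorphisms is natural, i.e. that for each morphism $f:S'/I'\to S/I$ in $\BBox$ the induced arrows $\XX(f)$ and $\tilde\XX(f)$ in $\Loc^\to$ agree under these identifications. Since both functors land in the full subcategory of \emph{spatial} locales, and since the (finite, hence sober) Alexandrov spaces occurring here are determined by their posets of points, with continuous maps corresponding to isotone maps, it suffices to compare the \emph{underlying isotone maps} of posets in each of the two components of the arrow category.

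The base components present no difficulty. For $\tilde\XX$ the base map is by construction the context map $\varphi^*:\sC_{S/I}\to\sC_{S'/I'}$, $c\mapsto\varphi^{-1}(c)$. For $\XX$ the base map is the locale map underlying the geometric morphism $\phi:\sE_{S/I}\to\sE_{S'/I'}$ of Proposition~\ref{pro:functor}; under the identification $\sE_{S/I}\simeq\Sh(\sC_{S/I})$ this is exactly the Alexandrov-continuous map induced by the same isotone map $\varphi^*$. Hence the base components coincide, and the issue is confined to the total components.

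For the total components I would first recall, from the proof of Proposition~\ref{pro:functor}, that the frame homomorphism $\phi^\sharp F_{S'/I'}\to F_{S/I}$ is obtained by applying the internal ideal completion $\Idl'$ (together with the monad-morphism natural transformation) to the homomorphism of internal Boolean algebras $\phi^*L_{S'/I'}\to L_{S/I}$, which is precisely $B\tilde\varphi$, the free Boolean algebra functor applied to $\tilde\varphi:\phi^*\iota'\to\iota$. Evaluated at a context $c$, this is $B(\tilde\varphi_c):B(\varphi^{-1}(c))\to B(c)$, induced by the restriction $\varphi|_{\varphi^{-1}(c)}:\varphi^{-1}(c)\to c$. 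Passing to phase spaces means applying the spectrum functor $\Sp$, right adjoint to $2^-:\Set\to\fBool^\op$; by naturality of this adjunction, $\Sp(B\tilde\varphi)$ is fibrewise over $c$ the map $2^c\to 2^{\varphi^{-1}(c)}$ given by precomposition, $x\mapsto x\circ\varphi|_{\varphi^{-1}(c)}=\varphi^*x$. Assembling over all contexts through the category-of-elements construction then produces exactly the isotone map $X_{S/I}\to X_{S'/I'}$, $(c,x)\mapsto(\varphi^*c,\varphi^*x)$, which is the total component of $\tilde\XX(f)$.

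I expect the main obstacle to be the step joining the frame-level description of $\XX(f)$ to the spectral map $\Sp(B\tilde\varphi)$: one must verify that the equivalence $\lFrmTop\simeq\Loc^\to$ — i.e. taking direct images, equivalently evaluating at the bottom context $\emptyset$ — sends $\phi^\sharp F_{S'/I'}\to F_{S/I}$ to the continuous map induced by $\Sp(B\tilde\varphi)$. Reprising the computation of Lemma~\ref{lem:ps-explicit} that identifies $f_*F_{S/I}$ with the frame $\OO(X_{S/I})$ of upper sets, I would check directly that the resulting frame homomorphism $\OO(X_{S'/I'})\to\OO(X_{S/I})$ is the inverse-image operation of $(c,x)\mapsto(\varphi^*c,\varphi^*x)$. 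The delicate point is that the $\Idl'$ and monad-morphism steps, after evaluation at $\emptyset$ and under the finiteness of each $L_{S/I}(c)$ (whereby every ideal is principal), collapse to precisely this preimage operation on upper sets. Once this is confirmed the two total maps agree, the isomorphisms of Lemma~\ref{lem:ps-explicit} are natural, and the asserted isomorphism $\tilde\XX\simeq\XX$ follows.
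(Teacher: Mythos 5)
Your proposal is correct and takes essentially the same route as the paper: the componentwise isomorphisms are exactly those of Lemma~\ref{lem:ps-explicit}, and naturality is checked by unwinding the construction of Proposition~\ref{pro:functor} (the frame homomorphism built from $B\tilde\varphi$ via $\Idl'$ and the monad morphism) against the definition of $\tilde\XX$. In fact your sketch — splitting into base and total components, using sobriety of the finite Alexandrov spaces to reduce to isotone maps, and computing $\Sp(B\tilde\varphi)$ as precomposition — carries out in more detail the very step the paper compresses into ``a straightforward, if somewhat tedious, unwinding.''
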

\begin{proof}
The components $\lambda_{S/I} : \XX(S/I) \to \tilde\XX(S/I)$ have already
been constructed in the proof of Lemma \ref{lem:ps-explicit}, as
diagrams $$ \begin{CD}
f_* \Omega_{S/I} @>{\simeq}>> \OO(\sC_{S/I}) \\
@VVV @VVV \\
f_* F_{S/I} @>>{\simeq}> \OO(X_{S/I})
\end{CD}$$
of frame homomorphisms, where $f : \sE_{S/I} \to \Set$ is the
geometric morphism. Checking that they
combine into a natural transformation is
a straightforward, if somewhat tedious, unwinding of the constructions
performed in the proof of Proposition \ref{pro:functor},
and comparison with the definition of $\tilde\XX$.
\end{proof}

\subsection{Examples}
The external phase space of the gbit is a five-point set
$X_{2/1}=\{*, a,a',b,b'\}$ where $*$ is the unique closed point,
and the
$\{a\},\{a'\},\{b\},\{b'\}$ form a basis for the topology
More precisely, $\XX(2/1)$ is
the projection $X_{2/1} \to \sC_{2/1}$ with fibres
$\{*\}$, $\{a,a'\}$, $\{b,b'\}$. Note that since
$\lFrmTop \to \Loc^\to$ is an equivalence, and
$\BBox^\op \to \lFrmTop$ preserves finite product,
it follows that so does $\XX$. Thus, the external phase
space of the PR box world is the product
$X_{2/1} \times X_{2/1} \to \sC_{2/1} \times \sC_{2/1}$.
This may of course be seen on a completely elementary level, viewing the
external phase space as a fibration of `outcomes' over `contexts'
(with topologies consisting of subsets stable under refinement).

\section{Discussion}
\label{sec:discussion}

\subsection{States and morphisms}
The assignment of the set of box world states to a presentation
$S/I$ in $\BBox$ gives rise to a functor $\BBox^\op \to\Conv$
into the category of convex spaces. It would be desirable to
extend such functoriality to more general framed topoi.
A possible version of this may be achieved if one works
with the \emph{locale} of probability valuations, rather
than a set. Given a bounded topos $\sE$,
Vickers~\cite{vickers2} defines a \emph{valuation monad} $\VV_\sE$
on $\Loc_\sE$.
It sends a locale corresponding to a frame $F$ to the locale
whose global points are precisely the probability
valuations on $F$. Furthermore, $\VV_\sE$-algebras are the internal \emph{convex locales},
so that the locale of probability valuations is tautologically convex. The most imporant
aspect of his construction is however geometricity, giving rise to a natural isomorphism
$f^\sharp V_\sE \to V_{\sE'} f^\sharp$ for a geometric morphism $f : \sE \to \sE'$.
It then follows that sending $(\sE, F)$ to $(\sE, V_\sE F)$ induces a functor
$\sV : \bFrmTop \to \bFrmTop$.

\subsection{Channels}
The most general notion of a channel between a pair of box worlds
is a map on their state spaces, preserving the convex structure. For
classical systems, this is equivalent to giving a probability kernel, or stochastic map,
between their phase spaces. In particular, a channel
from a system to itself is simply a Markov kernel, or a stochastic self-map.
Given a bounded framed topos $(\sE,F)$
one might by analogy consider homomorphisms $V_\sE F \to F$ of internal
frames (i.e. work in the Kleisli category of $\VV_\sE$), as `internal
channels' from the system to itself. However, these are forced to preserve
to implicit structure of contexts, and thus capture only the most `classical'
possibilities. For example, given a box world presented by $S/I$, a
homomorphism $V_{\sE_{S/I}} F_{S/I} \to F_{S/I}$ describes merely a
negation, with some probabilies, of some of the `propositions' in $S$. Thus,
a question remains how to give a proper description of channels, perhaps in
terms of $\sV$, extending the standard one for box worlds.

\subsection{General non-signalling theories}
\label{ss:general}
Our framework suggests a moderate generalisaion of non-signalling box worlds,
where the system is specified by a partially ordered set $\sC$ of contexts,
and a functor $L:\sC \to \fBool$ of measurement logics (or, equivalently, the
`spectral presheaf' $\Sp L : \sC^\op \to \Set$). A convenient assumption on $\sC$
is that its chains be finite. This representation is applicable to orthodox
quantum systems (with a finite-dimensional Hilbert space), quantum logics modelled
by orthomodular lattices (see~\cite{heunen} for the general notion of Bohrification
of models of an  algebraic theory with respect to a sub-theory), and possibly
more exotic structures such as the logic of properties induced by a convex state
space \`a la Mielnik~\cite{mielnik}.

\section{Acknowledgements}
The authors acknowledge partial suppoort of the John Templeton Foundation grant no. 43174.

\bibliographystyle{plain}
\bibliography{bns}

\end{document}